\newtheorem{lemma}{Lemma}
\newtheorem{theorem}{Theorem}
\newtheorem{prop}{Proposition}
\DeclareMathOperator{\Tr}{Tr}
\DeclareMathOperator{\MSE}{MSE}
\DeclareMathOperator*{\argmin}{argmin}
\DeclareMathOperator{\GCV}{GCV}
\DeclareMathOperator*{\sign}{sign}
\definecolor{lm}{rgb}{1, 0, 1}
\definecolor{dgr}{rgb}{0.4980392, 0.4980392, 0.4980392}
\definecolor{c1}{rgb}{0,  1, 0}
\definecolor{c2}{rgb}{0.2470588,  0.7490196, 0.2470588}
\definecolor{c3}{rgb}{0.4980392, 0.4980392, 0.4980392}
\definecolor{c4}{rgb}{0.7490196, 0.2470588, 0.7490196}
\definecolor{c5}{rgb}{1, 0, 1}
\DeclareRobustCommand\full  {\tikz[baseline=-0.6ex]\draw[c3, thick] (0,0)--(0.57,0);} 
\DeclareRobustCommand\dotted{\tikz[baseline=-0.6ex]\draw[c2,thick,dashdotted, dash pattern=on 5pt off 2pt on \the\pgflinewidth off 2pt] (0,0)--(0.57,0);} 
\DeclareRobustCommand \dottedd {\tikz[baseline=-0.6ex]\draw[c2, thick, densely dotted] (0,0)--(0.57,0);} 
\DeclareRobustCommand\denselydashed{\tikz[baseline=-0.6ex]\draw[c5, thick, dash pattern={on 7pt off 1.5pt}] (0,0)--(0.57,0);} 
\DeclareRobustCommand\dotdash {\tikz[baseline=-0.6ex]\draw[c4,thick] (0,0)--(0.62,0);} 
\DeclareRobustCommand\dotdashh {\tikz[baseline=-0.6ex]\draw[c4,dash dot] (0,0)--(0.62,0);} 
\newtheorem{example}{Example}
\begin{document}

\begin{frontmatter}

\title{Robust penalized spline estimation with difference penalties}
\author[1]{Ioannis Kalogridis} 
\ead{ioannis.kalogridis@kuleuven.be}

\author[1]{Stefan Van Aelst\corref{cor1}} 
\ead{stefan.vanaelst@kuleuven.be}

\address[1]{Department of Mathematics, KU Leuven (University of Leuven), Celestijnenlaan 200B, 3001 Leuven, Belgium}

\cortext[cor1]{Corresponding author.}

\begin{abstract}
Penalized spline estimation with discrete difference penalties (P-splines) is a popular estimation method for semiparametric models, but the classical least-squares estimator is highly sensitive to deviations from its ideal model assumptions. To remedy this deficiency, a broad class of P-spline estimators based on general loss functions is introduced and studied. Robust estimators are obtained by well-chosen loss functions, such as the Huber or Tukey loss function. A preliminary scale estimator can also be included in the loss function. It is shown that this class of P-spline estimators enjoys the same optimal asymptotic properties as least-squares P-splines, thereby providing strong theoretical motivation for its use. The proposed estimators may be computed very efficiently through a simple adaptation of well-established iterative least squares algorithms and exhibit excellent performance even in finite samples, as evidenced by a numerical study and a real-data example.
\end{abstract}

\begin{keyword}
P-splines \sep M-estimators \sep asymptotics 
\MSC 62G08 \sep 62G20 \sep 62G35
\end{keyword}

\end{frontmatter}
\section{Introduction}
\label{sec:1}
Based on data $(x_1, Y_1), \ldots, (x_n, Y_n)$ with fixed $x_i$, which we assume to be in $[0,1]$ without loss of generality, the classical nonparametric regression model posits the relationship
\begin{align}
Y_i = f_0(x_i) + \epsilon_i, \quad (i=1, \ldots, n),
\end{align}
where $f_0:[0,1] \to \mathbb{R}$ is a smooth regression function to be estimated from the data. The errors $\epsilon_i$ are independent and identically distributed noise terms, which are often assumed to have zero mean and finite variance, but we will not need this assumption for well-chosen loss functions (see Section 3).  

Nonparametric regression has been a popular field of statistics for many years now and many methods of estimating $f_0$ have been proposed. Piecewise polynomial estimators (splines) still occupy a prominent place.  The first class of such estimators, referred to as \textit{smoothing}  splines, was introduced in the mid-70s, see \citep{Wahba:1990, Green:1994} for a review, and almost dominated the literature until the mid 80s when their position was challenged by lower rank, computationally cheaper alternatives, termed \textit{regression} and \textit{penalized} splines \citep{Wegman:1983, O:1986}. The former class of estimators uses a small number of strategically placed knots without any penalty, while the latter class usually employs a large number of knots, although still fewer than the number of data points, in combination with a quadratic roughness penalty. The kind of penalty employed further distinguishes penalized spline estimators. Originally, \citet{O:1986} proposed a derivative-based penalty, effectively placing the penalized estimator in between smoothing and regression splines. Later on, \citet{Eilers:1996} recognized the versatility of a discrete difference penalty, giving rise to a completely new class of estimators called P-splines. 

Thanks to their flexible choice of knots and penalties, penalized splines with both derivative and difference-based penalties have in recent years become essential tools in data analysis and constitute the building blocks of many complex estimation methods. However, despite this widespread popularity of penalized-spline estimators, the literature has overwhelmingly focused on the theoretical study of the particular class of least-squares penalized spline estimators. \citet{Li:2008} studied the class of P-spline estimators with lower degree splines combined with a large number of knots and derived the equivalent kernel representation. Their results were subsequently extended to cover splines of arbitrary degree but still with a large number of knots~\citep{Wang:2011}. \citet{Claeskens:2009} provided a theoretical study of least-squares penalized spline estimators with derivative based-penalties, called O-splines, identifying the transition point in their asymptotic properties. More specifically, these authors showed that least-squares penalized O-splines with a small number of knots essentially exhibit regression spline asymptotics while with a large number of knots O-splines essentially behave like smoothing splines. The study of least-squares penalized spline estimators was unified by \citet{Xiao:2019}, who extended the results of \citet{Claeskens:2009} and established asymptotic properties of P-spline estimators in a more general context than \citet{Li:2008} and \citet{Wang:2011}. 

Since the least-squares criterion leads to penalized estimates that are vulnerable to atypical observations and model misspecification, a number of authors have considered alternative methods of estimation. Penalized splines based on more robust loss functions have scarcely appeared in the literature through the years in a number of different contexts, but mostly with little theoretical support. In nonparametric regression, \citet{Lee:2007} proposed replacing the square loss with a more resistant loss function in order to produce a robust O-spline estimator. \citet{T:2010} proposed minimizing a robust scale of the residuals in combination with a derivative-based penalty. \citet{Bol:2006} proposed a P-spline estimator for quantile regression with an $L_1$ penalty and monotonicity constraints, while \citet{An:2014} proposed and theoretically investigated a P-spline estimator for quantile regression in varying coefficient models with a bridge-type penalty. These authors established the consistency of the estimator with a slowly growing number of knots, essentially reproducing one of the possible asymptotic scenarios considered by \citet{Xiao:2019}. A family of robust P-spline estimators in the context of generalized additive models was proposed by \citet{Croux:2012}. More recently, \citet{Kalogridis:2021} showed that the convergence rates of the least-squares estimator with a derivative based penalty can be extended to M-type O-spline estimators using a sufficiently smooth loss function in their objective function. Unfortunately, the smoothness condition excludes popular loss functions such as quantile loss or Huber loss, for example. 

In practice, there may be a number of reasons one may opt for P-splines instead of O-splines. We mention, in particular, that P-splines are extremely easy to set up and allow for very flexible estimates. Both of these facts follow from the special difference penalty that these estimators employ. This penalty can be constructed almost mechanically even when higher order penalties are desired and, what is more, the order of the penalty can be chosen independently from the degree of the spline. The latter is not true when derivative-based penalties are used. A practical consequence is that the user is free to modify these parameters as seen fit in order to achieve the desired degree of smoothness. In view of these important benefits, it is curious that a systematic study of P-spline estimators based on general loss functions has not yet been undertaken.

As an important step in this direction, this paper develops a general asymptotic theory of P-spline estimators based on a wide variety of loss functions, convex and non-convex alike, with weak smoothness requirements, greatly expanding the results of \citep{Li:2008}, \citep{Xiao:2019} and \citep{Kalogridis:2021}. We show that the convergence rate of this class of estimators depend on the rate of growth of the number of knots as well as on the rate of decay of the penalty parameter, illustrating the similarities and also differences compared to penalized splines with derivative-based penalties. Our theory also permits the inclusion of fast-converging auxiliary scale estimates without altering the asymptotic properties of the robust P-spline estimators. Moreover, with minor modifications the methodology developed here for robust P-spline estimators can also be used to weaken the smoothness requirements in \citep{Kalogridis:2021}, thereby greatly extending existing results for robust O-spline estimators. 

The rest of the paper is structured as follows. Section~\ref{sec:2} introduces the family of generalized (M-type) P-spline estimators. We explain the construction of these estimators and draw a useful connection with splines based on derivative penalties. Section~\ref{sec:3} is devoted to the study of the asymptotic behavior of the estimators. We show that under weak assumptions, M-type P-spline estimators enjoy the same rates of convergence as the popular least-squares P-spline estimator, without the need for existence of any moments of the error term for suitably chosen loss functions. These results remain valid when a root-n auxiliary scale estimator is included in the objective function. Section~\ref{sec:4} illustrates via a simulation study the competitive performance of M-type P-spline estimators relative to the least-squares estimator for data with Gaussian errors and shows their superior performance for data with long tailed errors. A real-data application is presented in Section~\ref{sec:5}, while Section~\ref{sec:6} summarizes our conclusions. All proofs are collected in the appendix.

\section{The family of M-type P-spline estimators}
\label{sec:2}

\subsection{B-splines}
A spline is defined as a piecewise polynomial that is smoothly connected at its joints (knots). For any fixed integer $p \geq 1$, let $S_{K}^p$ denote the set of spline functions of order $p$ with knots $0=t_0<t_1 \ldots <t_{K+1}=1$. For $p=1,\ S_{K}^1$ is the set of step functions with jumps at the knots while for $p\geq 2$,
\begin{align*}
S_{K}^p = \{ s \in \mathcal{C}^{p-2}([0,1]): s(x)\ &\text{is a polynomial of degree $(p-1)$ } 
\text{on each subinterval $[t_i, t_{i+1} ]$} \}.
\end{align*}
Thus, $p$ controls the smoothness of the functions in $S_{K}^p$ while the number of interior knots $K$ represents the degree of flexibility of spline functions in $S_{K}^p$, see \citep{Ruppert:2003} and \citep{Wood:2017} for insightful discussions in this respect. It is easy to see that $S_{K}^p$ is a $(K+p)$-dimensional subspace of $\mathcal{C}^{p-2}([0,1])$ and a stable basis with good numerical properties is provided by the celebrated B-spline functions, which we now briefly describe~\citep[see][for a full treatment]{DB:2001}.

Let $\{t_k\}_{k=1}^{K+2p}$ be an augmented and relabelled sequence of knots obtained by repeating $t_0$ and $t_{K+1}$ exactly $p$ times. The B-spline functions are defined as linear combinations of truncated polynomials, i.e.,
\begin{equation}
\label{eq:2}
B_{k, p}(x) = \left(t_{k+p}-t_k\right) \left[t_k, \ldots, t_{k+p} \right] (t-x)_{+}^{p-1}, \qquad k=1, \ldots, K+2p,
\end{equation}
where for a function $g$ the placeholder notation $\left[t_i, \ldots, t_{i+p} \right]g$ denotes the $p$th order divided difference of $g(\cdot)$ at $t_i, \ldots, t_{i+p}$, see \citep[pp. 3-10]{DB:2001}. Among other interesting properties B-splines of order $p$ satisfy
\begin{itemize}
\item[(a)] Each $B_{k,p}$ is a polynomial of order $p$ on each interval $(t_k, t_{k+1})$ and has $(p-2)$ continuous derivatives.
\item[(b)] $0<B_{k,p }(x) \leq 1$ for $x \in (t_k, t_{k+p})$ and $B_{k,p}(x)=0$ otherwise. 
\item[(c)] $\sum_{k=1}^{K+2p} B_{k,p}(x) = 1$ for all $x \in [0,1]$.
\end{itemize}
Property (b) is referred to as the \textit{local support} property of the B-spline basis and is the main reason this basis system is so attractive for digital computing and functional approximation. It can be shown that this is the smallest possible support for any basis system for $S_{K}^p$, so that B-splines are in this sense the optimal spline basis. Property (c) is referred to as \textit{partition of unity} and an important implication is that B-spline functions are uniformly bounded. Further properties of splines and the B-spline basis may be found in the classical monographs of  \citet{DB:2001}  and \citet{Schumaker:2007}.

\subsection{P-spline estimators with general loss function}
The idea guiding P-spline estimators is the use of a rich spline basis defined for simplicity on equidistant knots, often with 30 or 40 knots, in order to minimize the approximation bias stemming from the approximation of a generic regression function with a spline, while also penalizing roughness with a difference penalty on the coefficients of the spline. To describe this penalty in more detail, we need to introduce some notation. First, we define the interior knots $t_i = (i-p)/(K+1), \ i =p+1, \ldots, K + p$ and let  $B_{k,p}, \ k =1, \ldots, K+2p$, denote the resulting B-spline basis functions. Then, let $\Delta$ denote the backward difference operator, i.e., $\Delta \beta_j = \beta_j - \beta_{j-1} $ and let $\Delta^q$ denote the composition $\Delta \Delta^{q-1}$. For example, $\Delta^2 \beta_j =\beta_j - 2 \beta_{j-1} + \beta_{j-2}$. Below we examine this penalty in more detail and derive a useful connection with the more intuitive O-spline penalty, but first we describe the M-type P-spline estimator in detail.

With the above notation, the M-type P-spline estimator of $f_0$ is now defined as the spline function $\widehat{f}(x) = \sum_{j=1}^{K+p} \widehat{\beta}_j B_{j,p}(x)$ with $\widehat{\boldsymbol{\beta}}$ the solution of
\begin{align}
\label{eq:3}
\widehat{\boldsymbol{\beta}} = \argmin _{\boldsymbol{\beta} \in \mathbb{R}^{K+p}} \left[ \frac{1}{n} \sum_{i=1}^n \rho\left( Y_i - \sum_{j=1}^{K+p} \beta_j B_{j,p}(x_i) \right) + \lambda \sum_{k=q+1}^{K+p} (\Delta^q \beta_k)^2 \right],
\end{align}
for some $\lambda \geq 0$ controlling the smoothness of the fit and some nonnegative loss function $\rho$ which satisfies $\rho(0) = 0$, where we have omitted the last $p$ B-splines $B_{K+p+1}, \ldots, B_{K+2p}$ as by property (b) their support is outside of $[0,1]$ and thus they do not contribute to the value of the objective function. The loss function $\rho(x) = x^2$ leads to the well-known P-spline estimator first  proposed by \citet{Eilers:1996}. However, the general formulation in~(\ref{eq:3}) permits more general loss functions that reduce the effect of large residuals. A popular example is Huber's function \citep{Huber:1964} given by
\begin{align*}
\rho_{k}(x) = \begin{cases} x^2/2 & |x| \leq k \\ k |x| - k^2/2 & |x| >k,
 \end{cases}
\end{align*}
for some $k >0$ controlling the blending of square and absolute losses. The minimal requirements on the loss function in Section~\ref{sec:3} are also satisfied by many other popular loss functions, such as the absolute loss, Tukey's loss and Hampel's loss~\citep[see, e.g.,][]{Maronna:2006}. Furthermore, since we do not require $\rho$ to be symmetric, our definition also includes P-spline estimators for the conditional quantiles and expectiles of $Y$, for which the theoretical understanding in the literature is rather limited.

For convex $\rho$-functions with a continuous derivative $\psi(x)$, identifying the minimizer of 
$\eqref{eq:3}$ is equivalent to finding $\boldsymbol{\widehat{\beta}}$ such that
\begin{align}
\label{eq:4}
-\frac{1}{n} \sum_{i=1}^n \psi\left( Y_i - 
\mathbf{B}_{K,p}^{\top}(x_i)\boldsymbol{\widehat{\beta}}\right)
\mathbf{B}_{K,p}(x_i) + 2 \lambda \mathbf{P}_q^{\top} \mathbf{P}_q \boldsymbol{\widehat{\beta}} = \mathbf{0}_{K+p},
\end{align}
where $\mathbf{B}_{K,p}(x)$ is the $(K +p)$-dimensional vector of B-splines evaluated at $x$ and $\boldsymbol{P}_{q}$ is the $(K+p-q) \times (K+p)$ matrix representative of the operator $\Delta^q$ on $\mathbb{R}^{K+p}$. For non-convex $\rho$-functions the minimizer of \eqref{eq:3} satisfies \eqref{eq:4}, but the equivalence between \eqref{eq:3} and \eqref{eq:4} is lost due to the possible existence of local minima. However, in either case \eqref{eq:4} suggests a fast iterative reweighted least squares algorithm for the determination of a solution of the set of the estimating equations, see e.g., \citep{Kalogridis:2021}.

A natural extension is to include a preliminary scale estimate $\widehat{\sigma}$ in \eqref{eq:3} which is easily achieved by modifying the  loss function $\rho$ to $\rho_{\widehat{\sigma}}(x) := \rho(x/\widehat{\sigma})$. The standard approach in robust statistics is to use a scale estimate $\widehat{\sigma}$ computed from the residuals of an initial robust fit to the data  \citep[see, e.g.,][]{Maronna:2006}. While this procedure yields a robust scale estimate, it is computationally demanding and the theoretical properties of such scale estimates are difficult to establish for nonparametric regression. As an alternative, we propose utilizing a robust scale constructed from consecutive differences of the responses, as proposed by \citet{Ghement:2008}. In particular, we use the scale estimator $\widehat{\sigma}$ obtained as the solution of
\begin{align}
\label{eq:5}
\frac{1}{n-1} \sum_{i=1}^{n-1} \rho_c \left( \frac{Y_{i+1}- Y_i}{2^{1/2} \widehat{\sigma}} \right) = \frac{3}{4}.
\end{align}
Here, the loss function $\rho_c$ is the bounded Tukey bi-square given by
\begin{align*}
\rho_c(x) = \begin{cases} 1-\left\{1-(x/c)^2 \right\}^3 & |x| \leq c \\
1 & |x| >c, \end{cases}
\end{align*}
with tuning parameter $c$ equal to $0.704$. The constants $2^{1/2}$ and $3/4$ ensure Fisher-consistency of the scale estimator at the Gaussian distribution and maximal breakdown value, respectively.

\subsection{The P-spline penalty}
In general, the order of the penalty $q$ in \eqref{eq:3} is chosen by the practitioner and reflects how smooth $f_0$ is anticipated to be. However, contrary to the case of O-splines with penalty equal to $\int_{0}^1 |f^{(q)}(x)|^2 dx$ for  $f \in S_K^p$ and $q < p$, the way in which the difference penalty enforces smoothness is not immediately obvious. In general, the difference operator $\Delta^q$ "annihilates" polynomials of order $q$, but this does not necessarily imply anything regarding $\widehat{f}$.  As it turns out, the penalty is intimately linked to the properties of B-spline functions and in order to obtain a better understanding we now derive a useful connection between the O-spline and the P-spline penalties. Specifically, for $f \in S_K^p$ with equidistant knots, \citet[p. 117]{DB:2001} gives the following differentiation formula
\begin{align*}
f^{(q)}(x) = K^q \sum_{j=q+1}^{K+p}  (\Delta^q \beta_j) B_{j,p-q}(x),
\end{align*}
where we have yet again omitted the last $p$ B-splines $B_{K+p+1}, \ldots, B_{K+2p}$ as they are identically equal to zero on $[0,1]$. Squaring and integrating we obtain
\begin{align}
\label{eq:6}
\int_{0}^1 |f^{(q)}(x)|^2 dx  = K^{2q} \sum_{i = q+1}^{K+p} \sum_{j=q+1}^{K+p}  (\Delta^q \beta_j) (\Delta^q \beta_i) \int_{0}^1  B_{j,p-q}(x) B_{i,p-q}(x) dx.
\end{align}
Now, the right-hand side of \eqref{eq:6} is a quadratic form in $\Delta^q \beta_j$ with coefficients given by the entries of the $2(p-q)$-banded matrix $\mathbf{G}_{ij}:= \int_{0}^1 B_{j,p-q}(x) B_{i,p-q}(x) dx$. By Theorem 5.4.2 of \citep{Devore:1993} there exist positive constants $c_1$ and $c_2$ depending only on $p$ and $q$ such that for each spline $\tilde{f} = \sum_{j} \tilde{\beta}_j B_{j,p-q}$,
\begin{align*}
c_1 \sum_{j} \tilde{\beta}_j^2 \leq K \int_{0}^1 |\tilde{f}(x)|^2 dx \leq  c_2 \sum_{j} \tilde{\beta}_j^2.
\end{align*}
From this we may deduce that all the eigenvalues  of $\mathbf{G}$ are in $[c_1 K^{-1},  c_2K^{-1}]$ and it now follows from \eqref{eq:6} that
\begin{align*}
c_1 K^{2q-1} \sum_{j=q+1}^{K+p} (\Delta^q \beta_j)^2  \leq \int_{0}^1 |f^{(q)}(x)|^2 dx \leq  c_2 K^{2q-1} \sum_{j=q+1}^{K+p} (\Delta^q \beta_j)^2.
\end{align*}
We have thus established the following result relating the difference and derivative penalties.
\begin{prop}
\label{prop:1}
If $p>q$, then there exist positive constants $c_2 \geq c_1$ depending only on $p$ and $q$ such that for every $f = \sum_{j} \beta_j B_{j, p}$ we have
\begin{align*}
c_1 \sum_{j=q+1}^{K+p} (\Delta^q \beta_j)^2  \leq K^{1-2q} \int_{0}^1 |f^{(q)}(x)|^2 dx \leq c_2 \sum_{j=q+1}^{K+p} (\Delta^q \beta_j)^2.
\end{align*}
\end{prop}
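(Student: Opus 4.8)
The plan is to reduce the claimed two-sided bound to a statement about the extreme eigenvalues of a lower-order B-spline Gram matrix. The key observation, available through de Boor's differentiation formula for splines on equidistant knots \citep[p. 117]{DB:2001}, is that differentiating $f = \sum_j \beta_j B_{j,p}$ exactly $q$ times---legitimate precisely because $p>q$, so that $f$ has $q$ continuous derivatives and $f^{(q)} \in S_K^{p-q}$---turns the coefficient vector into the $q$th-difference vector, up to the factor $K^q$:
\[
f^{(q)}(x) = K^q \sum_{j=q+1}^{K+p} (\Delta^q \beta_j)\, B_{j,p-q}(x),
\]
where the last $p$ B-splines are dropped since they vanish on $[0,1]$. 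Squaring and integrating gives $\int_0^1 |f^{(q)}(x)|^2\,dx = K^{2q}\, (\Delta^q\boldsymbol{\beta})^{\top} \mathbf{G}\, (\Delta^q\boldsymbol{\beta})$, with $\mathbf{G}_{ij} = \int_0^1 B_{i,p-q}(x) B_{j,p-q}(x)\,dx$ the Gram matrix of the order-$(p-q)$ basis.

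Next I would invoke the $L^2$-stability of the B-spline basis on equidistant knots---Theorem 5.4.2 of \citet{Devore:1993}---which supplies positive constants $c_1,c_2$ depending only on the order $p-q$, hence only on $p$ and $q$, such that $c_1 \sum_j \tilde{\beta}_j^2 \leq K \int_0^1 |\sum_j \tilde{\beta}_j B_{j,p-q}(x)|^2\,dx \leq c_2 \sum_j \tilde{\beta}_j^2$ for every coefficient vector. Since the middle quantity is exactly $K\, \tilde{\boldsymbol{\beta}}^{\top}\mathbf{G}\,\tilde{\boldsymbol{\beta}}$, this says precisely that every eigenvalue of $\mathbf{G}$ lies in $[c_1 K^{-1}, c_2 K^{-1}]$. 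Taking $\tilde{\boldsymbol{\beta}} = \Delta^q\boldsymbol{\beta}$ and substituting into the identity for $\int_0^1 |f^{(q)}|^2$ yields
\[
c_1 K^{2q-1} \sum_{j=q+1}^{K+p}(\Delta^q\beta_j)^2 \leq \int_0^1 |f^{(q)}(x)|^2\,dx \leq c_2 K^{2q-1} \sum_{j=q+1}^{K+p}(\Delta^q\beta_j)^2,
\]
and multiplying through by $K^{1-2q}$ gives the proposition.

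The main obstacle, and the only genuinely non-trivial ingredient, is the uniformity of $c_1$ and $c_2$ in $K$: one needs the $K$-rescaled Gram matrix to have condition number bounded independently of the number of knots. This is exactly where equidistance is used---for a general knot sequence the constants degrade with the global mesh ratio, whereas on a uniform partition that ratio is $1$ and \citet{Devore:1993} delivers constants depending only on the spline order. A secondary point to check is the end-knot bookkeeping: de Boor's formula as used already discards the $p$ right-end B-splines that are identically zero on $[0,1]$, so the summation ranges and the dimension $K+p-q$ of $\Delta^q\boldsymbol{\beta}$ stay consistent and no boundary terms are lost. Everything else is routine linear algebra.
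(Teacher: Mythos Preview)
Your proposal is correct and follows essentially the same route as the paper: apply de Boor's differentiation formula to express $f^{(q)}$ in the lower-order B-spline basis with coefficients $K^q\Delta^q\beta_j$, write $\int_0^1|f^{(q)}|^2$ as the corresponding quadratic form in the Gram matrix $\mathbf{G}$, and then invoke Theorem~5.4.2 of \citet{Devore:1993} to pin the eigenvalues of $\mathbf{G}$ in $[c_1K^{-1},c_2K^{-1}]$. Your additional remarks on why equidistance secures uniformity of the constants and on the end-knot bookkeeping are accurate and not spelled out in the paper, but the argument itself is identical.
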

Proposition \ref{prop:1} shows that the null spaces of derivative-based and difference-based penalties of the same order are identical, thus leading to a better understanding of the difference penalty in \eqref{eq:3}. Indeed, for large $\lambda$ the estimator becomes  a polynomial of degree at most $(q-1)$ while for small $\lambda$ the penalized estimator reduces to a regression spline estimator, which is likely to be very wiggly due to the use of a large number of knots. Combining these two observations reveals that for $\lambda>0$ the difference penalty shrinks the estimator $\widehat{f}_n$ towards a polynomial of degree $(q-1)$. For example, when $q=2$ the difference penalty "pulls" the estimators towards an affine function throughout $[0,1]$.

Although the null spaces of derivative and difference based penalties are identical, that is not to say that the corresponding estimators will be identical in practice. In particular, the behaviour of O-spline and P-spline estimators may differ significantly near the boundaries of the interval $[0,1]$. O-spline estimators can be shown to be  polynomials of order $q$ over $[0, t_{p+1}]$ and $[t_{K+p},1]$, a property that they inherit from smoothing splines. Interestingly, as \citep{Wand:2008} demonstrate, this boundary adjustment may not hold for P-spline estimators; these estimators quite often remain of order $p$ throughout $[0,1]$. It is difficult to predict whether the presence of boundary adjustments has a positive or a negative effect on the penalized estimators. On the one hand, these so-called \textit{natural} boundary conditions reduce the variance of the estimator near the boundary occasionally leading to overall gains \citep{Wand:2008}. On the other hand, the approximation bias is increased \citep{DB:2001} to a degree that it has prompted authors to adapt the integrated penalty near the boundary \citep[see e.g.,][]{Oe:1992}.

\section{Asymptotic properties}
\label{sec:3}
\subsection{M-type P-splines with scale known or redundant}

We now investigate the asymptotic properties of P-spline estimators  based on general loss functions. First we focus on the properties of M-type P-spline estimators where either the scale is known, in which case it can be absorbed into the loss function, or it is not required, as in the case of quantile and expectile P-spline estimators. The assumptions needed for our theoretical development are given in two parts. The first two assumptions concern the design points and knots while the other assumptions concern the loss function.

\begin{itemize}
\item[A.1] Let $Q_n$ denote the empirical distribution of  the design points $x_i \in [0,1], i = 1, \ldots, n$. It is assumed that there exists a distribution function $Q$ with corresponding density $w$ bounded away from zero and infinity such that $\sup_{x}|Q_n(x) - Q(x)| = o(K^{-1})$.
\item[A.2] The number of knots $K = K_n \to \infty$ as $n \to \infty$ and $K=o(n)$.
\end{itemize}
Assumption A.1 is standard in spline estimation, at least going back to \citep{Shen:1998}, and essentially ensures that the design points are well-spread throughout the $[0,1]$-interval. Assumption A.2 is a weak restriction on the rate of growth of the knots as the sample size tends to infinity. Both of these assumptions are also used for least-squares P-spline estimators  \citep{Xiao:2019}. For the broad family of estimators considered herein we will additionally require the following set of assumptions.
\begin{itemize}
\item[A.3] The loss function $\rho(x)$ is absolutely continuous with derivative $\psi(x)$ existing almost everywhere and satisfies $\rho(0)=0$.
\item[A.4] There exist constants $\kappa$ and $M_1$ such that for all $x \in \mathbb{R}$ and $|y| < \kappa$,
\begin{align*}
|\psi(x+y)-\psi(x)| \leq M_1.
\end{align*}
\item[A.5] There exists a constant $M_2$ such that
\begin{align*}
\sup_{|t| \leq h} \mathbb{E}\{ |\psi(\epsilon_1+t) - \psi(\epsilon_1)|^2 \} \leq M_2 |h|,
\end{align*}
as $h \to 0$.
\item[A.6] $\mathbb{E}\{|\psi(\epsilon_1)|^2 \} \leq \tau^2 \in (0, \infty)$, $\mathbb{E}\{\psi(\epsilon_1)\} = 0$ and
\begin{align*}
\mathbb{E}\{\psi(\epsilon_1+t) \} = \xi t + o(t),
\end{align*}
for some $\xi>0$, as $t \to 0$.
\end{itemize}
Our conditions are reminiscent of the conditions in \citep{Bai:1994} and \citep{He:2000} in the context of unpenalized M-estimation and  allow for a wide variety of loss functions, convex and non-convex alike.  Assumption A.4 requires that the score function has locally uniform bounded increments while assumption A.5 is a little stronger than mean-square continuity at zero. Finally, assumption A.6 is a basic Fisher-consistency condition, variants of which have been widely used in robust estimation  \citep[see, e.g.,][for important examples]{Maronna:2006}. It can  easily be checked that the first part holds if $\psi$ is bounded and odd and the error has a symmetric distribution about zero, for example. The second part requires that the function $m(t):= \mathbb{E}\{\psi(\epsilon_1+t) \}$ is differentiable at zero with strictly positive derivative, denoted here by $\xi$, which is a necessary condition for a consistent local minimum to exist in the limit. This is not a stringent condition and covers many interesting estimators, as we now show.

\begin{example}[Squared loss] In this case $\psi(x) = 2x$ and the second part of assumption A.6 holds with $\xi = 2$, provided that $\mathbb{E}\{\epsilon_{1} \} = 0$,  as in \citet{Xiao:2019}.
\end{example}

\begin{example}[Smooth loss functions] All monotone everywhere differentiable $\psi$ functions with bounded second derivative $\psi^{\prime \prime}(x)$, such as $\rho(x) = \log(\cosh(x))$, satisfy the second part of A.6 if
\begin{align*}
0<\mathbb{E}\{\psi^{\prime}(\epsilon_{1})\} < \infty,
\end{align*}
as in \citet{Kalogridis:2021}.
\end{example}

\begin{example}[Check loss]
First, consider the absolute loss for which $\psi(x)=\sign(x)$. If $\epsilon_{1}$ has a distribution function $F$ with positive density $f$ on an interval about zero, then
\begin{equation*}
\mathbb{E}\{ \sign(\epsilon_{1}+t) \} = 2f(0)t + o(t), \quad \text{as} \quad t \to 0,
\end{equation*}
so that A.6 holds with $\xi = 2f(0)$. This easily generalizes  to the check loss $\rho_{\alpha}(x) = x(\alpha - \mathcal{I}(x<0)), \ \alpha \in (0,1)$, provided that in this case one views the regression function $f_0$ as the $\alpha$-quantile function, that is, $\Pr(Y_{i} \leq f_0(x_i) ) = \alpha$, see \citet{Koenker:2005}.
\end{example} 

\begin{example}[Huber loss]
Now $\psi_k(x) = \max(-k, \min(x,k))$ for some $k>0$. Assuming that $F$ has a positive density in a neighbourhood of $-k$ and $k$ we have that
\begin{equation*}
\mathbb{E}\{ \psi_k(\epsilon_{1}+t) \} = \{F(k)-F(-k)\}t + o(t), \quad \text{as} \quad t \to 0.
\end{equation*}
The term in curly brackets is positive for all $k>0$ if, e.g., $F$ is absolutely continuous with density symmetric about zero, whence we can take $\xi = 2F(k)-1$.
\end{example}

\begin{example}[$L_{q}$ loss with $q \in (1,2)$] Here, $\rho_{q}(x) = |x|^{q}$ and  $\psi_q(x) = q |x|^{q-1} \sign(x)$.  If we assume that $F$ is symmetric about zero, $\mathbb{E}\{| \epsilon_{1}|^{q-1}\} < \infty$ and $\mathbb{E}\{|\epsilon_{1}|^{q-2} \}<\infty$, then
\begin{equation*}
\mathbb{E}\{\psi_q(\epsilon_{1}+t) \} = q(q-1)\mathbb{E}\{|\epsilon_{1}|^{q-2} \} t + o(t), \quad \text{as} \quad t \to 0,
\end{equation*}
see \citet{Arcones:2001}. The latter expectation is finite, if, e.g., $F$ possesses a Lebesgue density $f$ that is bounded at an interval about zero. In this case A.6 holds with $\xi = q(q-1)\mathbb{E}\{|\epsilon_{1}|^{q-2} \}$.
\end{example}

\begin{example}[Expectile loss]
As an alternative to the check loss, consider the expectile loss $\rho_{\alpha}(x) = x^2/2(| \alpha - \mathcal{I}(x\leq 0)|)$ with $\alpha \in (0,1)$, such that $\psi_{\alpha}(x) = (1-\alpha)x \mathcal{I}(x \leq 0) + \alpha x\, \mathcal{I}(x>0)$. Assuming that there is an interval about the origin in which $F$ has no atoms we have
\begin{equation*}
\mathbb{E}\{\psi_{\alpha}(\epsilon_{1}+t) \} =  \{\alpha + (1- 2 \alpha) F(0)\}t + o(t), \quad \text{as} \quad t \to 0.
\end{equation*}
The term in curly brackets is positive for each $\alpha \in (0,1)$. Therefore, A.6 holds with $\xi = \alpha + (1- 2 \alpha) F(0)$.
\end{example}

\begin{example}[Hampel loss]
For positive constants $a\leq b <c<\infty$ consider the non-convex three-point Hampel loss \citep{Hampel:2011} $\rho_{a,b,c}(x)$ given by 
\begin{align*}
\rho_{a,b,c}(x) = \begin{cases} x^2/2 & |x| \leq a \\ 
a(|x|-a/2) & a \leq |x| <b \\
 \frac{a(|x|-c)^2}{2(b-c)} + \frac{a(b+c-a)}{2} & b \leq  |x| < c 
 \\  \frac{a(b+c-a)}{2} & c \leq |x| \end{cases}.
\end{align*}
Then, assuming that $F$ is absolutely continuous and symmetric about zero with Lebesgue-density $f$ we have that
\begin{align*}
\mathbb{E}\{\psi_{a,b,c}(\epsilon_1+t) \} = \left(2F(a)-1 - 2a\frac{F(c)-F(b)}{c-b}\right)t + o(t), \quad \text{as} \quad t \to 0.
\end{align*}
The term in brackets is positive provided, e.g., that $f(x)$ is strictly decreasing in $|x|$. In this case A.6 is satisfied with $\xi = 2F(a)-1 - 2a\frac{F(c)-F(b)}{c-b}$.
\end{example}

It should be noted that although the loss functions in Examples 3--7 are very popular among practitioners, the theoretical properties of the associated P-spline estimators have not been described before. In particular, since the score functions are not twice (not even once) continuously differentiable, none of these estimators is covered by the theory of \citep{Kalogridis:2021}. As the above examples reveal, our conditions permit smoothness to be traded between $\psi$ and $F$ and thus cover a much wider variety of loss functions.

Our aim is to examine convergence of $\widehat{f}$ to $f_0$ with respect to the usual $\mathcal{L}_2$-norm, denoted by $||\cdot||_2$. With the above assumptions we can now state our first theoretical result on the asymptotic properties of M-type P-spline estimators that do not depend on an auxiliary scale estimate, for example, the quantile P-spline. In our asymptotic results both $K$ and $\lambda$ depend on $n$, but for convenience we suppress this dependence in the notation. To lighten the notation further, we also adopt the following abbreviations: $\lambda_K = \lambda K^{1-2q}$, $d_{K,\lambda} = \min\{K, \lambda_K^{-1/2q} \}$ and $s_{K, \lambda} = \min\{\lambda_K^2 K^{2q}, \lambda_K \} $. These quantities also depend on $n$ via $K$ and $\lambda$ but for notational clarity we likewise omit to write $n$ in the subscript.

\begin{theorem}
\label{Thm:1} \sloppy Suppose that assumptions A.1--A.6 hold as well as $\lim_{n} \lambda= \lim_n n^{-1} K d_{K, \lambda} = \lim_n K s_{K, \lambda} =0$ and $\lim_n n^{\delta-1} d_{K,\lambda}^{-1} K^3 = 0$ for some $\delta>0$. If $f_0 \in \mathcal{C}^{j}([0,1]), \ q \leq   j \leq p$, then there exists a sequence $\widehat{f}_n$ of local minimizers of \eqref{eq:3} such that
\begin{equation}
\label{eq:7}
||\widehat{f}_n- f_0||^2_2 =  n^{-1} O_P \left( \min\{K, \lambda_K^{-1/2q} \}  \right) + O_P \left(\min\{ \lambda_K^2 K^{2q}, \lambda_K\} \right) + O_P(K^{-2j}).
\end{equation}
Furthermore, for Lipschitz-continuous $\psi$-functions the condition $\lim_n n^{\delta-1} d_{K,\lambda}^{-1} K^3 = 0$ can be replaced by $\lim_n n^{\delta-1} K^2 = 0$ for some $\delta>0$.
\end{theorem}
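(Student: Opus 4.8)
The plan is to compare $\widehat{f}_n$ with a fixed deterministic penalized spline $\bar f\in S_K^p$ and to construct a local minimizer of the objective $L_n(\boldsymbol\beta)$ in \eqref{eq:3} inside a ball around $\bar{\boldsymbol\beta}$ of the claimed radius, the ball being taken in the penalized norm $\|\boldsymbol\delta\|_\#^2:=\|\delta\|_n^2+\lambda\|\mathbf P_q\boldsymbol\delta\|^2$ where $\delta=\sum_j\delta_jB_{j,p}$, $\|g\|_n^2=\frac1n\sum_{i=1}^ng^2(x_i)$ and $\langle g,h\rangle_n=\frac1n\sum_ig(x_i)h(x_i)$. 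Three preliminary facts are used throughout. (i) Under A.1--A.2 the empirical, $L_2(Q)$ and Lebesgue norms are equivalent on $S_K^p$, and by Theorem~5.4.2 of \citep{Devore:1993} each is, up to constants, $K^{-1}$ times the squared Euclidean coefficient norm; with Proposition~\ref{prop:1} this gives $\lambda\|\mathbf P_q\boldsymbol\delta\|^2\asymp\lambda_K\int_0^1|\delta^{(q)}|^2$, so $\|\cdot\|_\#$ is equivalent to the natural penalized $L_2$-norm. (ii) A simultaneous (Demmler--Reinsch type) diagonalization of the empirical Gram matrix $n^{-1}\sum_i\mathbf B_{K,p}(x_i)\mathbf B_{K,p}^\top(x_i)$ and of $\mathbf P_q^\top\mathbf P_q$ yields a $\|\cdot\|_n$-orthonormal basis $\phi_1,\dots,\phi_{K+p}$ of $S_K^p$ with penalty eigenvalues $0=\gamma_1=\dots=\gamma_q<\gamma_{q+1}\le\cdots$; standard eigenvalue estimates for the B-spline penalty, as in \citep{Xiao:2019}, give $\gamma_k\asymp K^{1-2q}k^{2q}$ for $k>q$, so $\|\boldsymbol\delta\|_\#^2\asymp\sum_k(1+\lambda_Kk^{2q})c_k^2$ in this basis and $\sum_k(1+\lambda_Kk^{2q})^{-1}\asymp d_{K,\lambda}$. (iii) Since $f_0\in\mathcal{C}^j$ there is $f^\ast\in S_K^p$ with $\|f^\ast-f_0\|_\infty=O(K^{-j})$ and $\int_0^1|(f^\ast)^{(q)}|^2=O(1)$; with $H^q$-stability of the spline projection and Proposition~\ref{prop:1} this yields $\sum_k\gamma_k\langle f_0,\phi_k\rangle_n^2=O(K^{1-2q})$.

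\emph{Deterministic part.} I would take $\bar f=\sum_j\bar\beta_jB_{j,p}$ as the minimizer over $\boldsymbol\beta$ of $\frac1n\sum_i(f_0(x_i)-\mathbf B_{K,p}^\top(x_i)\boldsymbol\beta)^2+(2\lambda/\xi)\|\mathbf P_q\boldsymbol\beta\|^2$; the factor $2/\xi$ is deliberate, since it will make a cross term vanish below. In the Demmler--Reinsch basis $\bar f$ is the shrinkage estimate with coefficients $\langle f_0,\phi_k\rangle_n/(1+(2\lambda/\xi)\gamma_k)$, so $\|\bar f-f_0\|_2^2$ decomposes into $\|f_0-\Pi_nf_0\|_2^2$ plus a sum over $k$ of $\big((2\lambda\gamma_k/\xi)/(1+2\lambda\gamma_k/\xi)\big)^2\langle f_0,\phi_k\rangle_n^2$; bounding the last summand by $\min\{(2\lambda\gamma_k/\xi)^2,2\lambda\gamma_k/\xi\}$ and using (iii), $\gamma_{K+p}=O(K)$ and $\|f_0-\Pi_nf_0\|_2=O(K^{-j})$ gives $\|\bar f-f_0\|_2^2=O(\min\{\lambda_K^2K^{2q},\lambda_K\})+O(K^{-2j})=O(s_{K,\lambda})+O(K^{-2j})$. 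The normal equations of $\bar{\boldsymbol\beta}$, namely $\frac1n\sum_i(f_0(x_i)-\bar f(x_i))\delta(x_i)=(2\lambda/\xi)\langle\mathbf P_q\bar{\boldsymbol\beta},\mathbf P_q\boldsymbol\delta\rangle$ for every $\boldsymbol\delta$, are recorded for the next step.

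\emph{Expansion of the objective.} Writing $\bar r_i=Y_i-\bar f(x_i)=\epsilon_i+\bar b_i$ with $\bar b_i=f_0(x_i)-\bar f(x_i)$, using $\rho(\bar r_i-t)-\rho(\bar r_i)=-\int_0^t\psi(\bar r_i-s)\,ds$ and $m(t)=\mathbb{E}\{\psi(\epsilon_1+t)\}$, one obtains for $\|\boldsymbol\delta\|_\#=R_n$
\begin{align*}
L_n(\bar{\boldsymbol\beta}+\boldsymbol\delta)-L_n(\bar{\boldsymbol\beta})
&=\tfrac{\xi}{2}\|\delta\|_n^2+\lambda\|\mathbf P_q\boldsymbol\delta\|^2-\tfrac1n\sum_i\psi(\epsilon_i)\delta(x_i)\\
&\quad+\Big(2\lambda\langle\mathbf P_q\bar{\boldsymbol\beta},\mathbf P_q\boldsymbol\delta\rangle-\tfrac{\xi}{n}\sum_i\bar b_i\delta(x_i)\Big)+E_n(\boldsymbol\delta)+r_n(\boldsymbol\delta),
\end{align*}
where $E_n(\boldsymbol\delta)=-\tfrac1n\sum_i\int_0^{\delta(x_i)}\{\psi(\epsilon_i+\bar b_i-s)-\psi(\epsilon_i)-m(\bar b_i-s)\}\,ds$ is an empirical-process remainder and $r_n(\boldsymbol\delta)=-\tfrac1n\sum_i\int_0^{\delta(x_i)}\{m(\bar b_i-s)-\xi(\bar b_i-s)\}\,ds$ the Taylor error from A.6. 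The bracketed term vanishes by the normal equations of the previous step (the reason for the factor $2/\xi$), and the quadratic part is deterministically $\gtrsim R_n^2$ by (i). Since $\mathbb{E}\{\psi(\epsilon_1)\}=0$, the linear stochastic term equals $\sum_kc_kW_{n,k}$ with $W_{n,k}=\frac1n\sum_i\psi(\epsilon_i)\phi_k(x_i)$ and $\mathbb{E}\{W_{n,k}^2\}\le\tau^2/n$ by A.6, so Cauchy--Schwarz in $\|\cdot\|_\#$ and Markov's inequality give $\sup_{\|\boldsymbol\delta\|_\#\le R_n}\big|\sum_kc_kW_{n,k}\big|\le R_n\big(\sum_kW_{n,k}^2/(1+\lambda\gamma_k)\big)^{1/2}=R_n\cdot O_P((n^{-1}d_{K,\lambda})^{1/2})$. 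Finally, the hypotheses $\lim_n\lambda=\lim_nn^{-1}Kd_{K,\lambda}=\lim_nKs_{K,\lambda}=0$ force $\|\delta\|_\infty\lesssim K^{1/2}R_n$ and $\|f_0-\bar f\|_\infty$ to be $o(1)$ on the ball, so the local linearization in A.6 applies uniformly there and $\sup_{\|\boldsymbol\delta\|_\#\le R_n}|r_n(\boldsymbol\delta)|=o(R_n^2)$.

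\emph{The crux.} It remains to bound $\sup_{\|\boldsymbol\delta\|_\#\le R_n}|E_n(\boldsymbol\delta)|$, and this is the main obstacle: because $\psi$ is only assumed to satisfy A.3--A.5, no second-order Taylor expansion is available and $E_n$ must be controlled by a modulus-of-continuity argument. By A.4 the map $\boldsymbol\delta\mapsto E_n(\boldsymbol\delta)$ is deterministically $O(1)$-Lipschitz in $\|\cdot\|_\#$ (with the far smaller constant $O(K^{-j}+K^{1/2}R_n)$ when $\psi$ is Lipschitz), so it suffices to control $E_n$ on an $\varepsilon$-net of the ball in $\mathbb{R}^{K+p}$. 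For fixed $\boldsymbol\delta$, a Cauchy--Schwarz bound on the inner integral combined with A.5 gives $\Var\{E_n(\boldsymbol\delta)\}\lesssim n^{-1}K^{1/2}R_n^3$, and Bernstein's inequality with a union bound over a net of cardinality $\lesssim R_n^{-(K+p)}$ shows $\sup|E_n|=o_P(R_n^2)$ precisely when $n^{\delta-1}d_{K,\lambda}^{-1}K^3\to0$ for some $\delta>0$; for Lipschitz $\psi$ the sharper bound $\Var\{E_n(\boldsymbol\delta)\}\lesssim n^{-1}KR_n^4$ together with the sharper Lipschitz constant relaxes this requirement to $n^{\delta-1}K^2\to0$. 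Combining the four estimates, for $R_n^2=M(n^{-1}d_{K,\lambda}+s_{K,\lambda}+K^{-2j})$ with $M$ large one obtains, on an event of probability at least $1-\varepsilon$, that $L_n(\bar{\boldsymbol\beta}+\boldsymbol\delta)>L_n(\bar{\boldsymbol\beta})$ for all $\boldsymbol\delta$ with $\|\boldsymbol\delta\|_\#=R_n$; since $L_n$ is continuous on the compact ball it then has a local minimizer $\widehat{\boldsymbol\beta}$ with $\|\widehat{\boldsymbol\beta}-\bar{\boldsymbol\beta}\|_\#<R_n$, whence $\|\widehat f_n-\bar f\|_2^2=O_P(R_n^2)$, and together with $\|f_0-\bar f\|_2^2=O(s_{K,\lambda}+K^{-2j})$ the triangle inequality yields \eqref{eq:7}.
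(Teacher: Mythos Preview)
Your argument is correct and follows the same overall architecture as the paper's proof: reparametrize to the penalized norm, show the objective increases on a sphere of radius $\asymp C_n^{1/2}$, and decompose the increment into a deterministic quadratic part, a linear stochastic term, a penalty cross-term, and a centered empirical-process remainder that is handled via an entropy/Bernstein argument under A.4--A.5. The variance bound $n^{-1}K^{1/2}R_n^3$ and the uniform bound $K^{1/2}R_n$ you obtain for $E_n$ are exactly the paper's bounds \eqref{eq:16}--\eqref{eq:17} for $I_1-\mathbb{E}\{I_1\}$ (with $R_n^2\asymp C_n$), and your net-plus-Bernstein calculation reproduces the condition $n^{\delta-1}d_{K,\lambda}^{-1}K^3\to0$; the Lipschitz refinement likewise matches the paper's final remark.

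The one substantive difference is the choice of deterministic centre. The paper centres at the de~Boor approximant $f^\star$ with $\|f_0-f^\star\|_\infty=O(K^{-j})$, which forces it to bound separately the penalty cross-term $I_3(\boldsymbol\gamma)=-2C_n^{1/2}\lambda\boldsymbol\gamma^\top\mathbf G_\lambda^{-1/2}\mathbf P_q^\top\mathbf P_q\boldsymbol\beta^\star$ and the bias cross-term $I_{12}(\boldsymbol\gamma)=C_n^{1/2}n^{-1}\sum_iR_i\mathbf B^\top(x_i)\mathbf G_\lambda^{-1/2}\boldsymbol\gamma$, each shown to be $O(DC_n)$. Your choice of $\bar f$ as the penalized least-squares projection with weight $2\lambda/\xi$ is designed so that its normal equations annihilate both of these at once; the price is that the regularization bias $\min\{\lambda_K^2K^{2q},\lambda_K\}$ must instead be extracted from $\|\bar f-f_0\|_2^2$ via the Demmler--Reinsch eigenvalue estimates (the paper uses those same estimates, Lemma~5.1 of \citet{Xiao:2019}, only for the variance term $I_2$). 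Both routes rely on the same ingredients---Proposition~\ref{prop:1}, the eigenvalue bounds $\gamma_k\asymp K^{1-2q}k^{2q}$, and the $H^q$-stability of spline approximation (the paper cites Theorem~(26), Chapter~XII of \citet{DB:2001}, whereas you invoke ``$H^q$-stability of the spline projection'')---so the difference is organizational rather than conceptual. Your packaging is arguably cleaner, at the cost of a reference point that depends on the unknown $\xi$ and whose sup-norm control $\|\bar b\|_\infty=o(1)$ has to be argued via $\|\bar f-f^\star\|_\infty\le cK^{1/2}\|\bar f-f^\star\|_2$ rather than read off directly from Lemma~\ref{Lem:1}.
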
	
Theorem \ref{Thm:1} establishes the same mean-squared error result as derived by \citet{Xiao:2019} for least-squares loss, for a broad class of estimators under minimal additional assumptions. An important implication is that, unlike the least-squares case, penalized M-estimators with bounded score functions can attain this mean-squared error rate even if the error does not possess any finite moments. The first term on the right-hand side of \eqref{eq:7} corresponds to the variance of the P-spline estimator, whereas the following two terms represent the regularization and modelling bias, respectively. The latter arises from the approximation of a generic $\mathcal{C}^{j}([0,1])$ function by a spline \citep[see][p. 149]{DB:2001}. It is important to note that except for this approximation bias, the error rate simultaneously depends on both $K$ and $\lambda$, highlighting the interplay between the knots and penalty in the asymptotics of penalized spline estimators.

Similarly to the case of O-splines investigated by \citet{Claeskens:2009, Xiao:2019} and \citet{Kalogridis:2021} this error decomposition points to a transition between two asymptotic scenarios, depending on the rate of growth of the knots and the rate of decay of the penalty parameter. In particular, for $f \in \mathcal{C}^p([0,1])$  and $K < \lambda_K^{-1/2q}$ (or equivalently, $\lambda K <1$) for all large $n$, one is led to 
\begin{align*}
||\widehat{f}_n - f_0||^2_2 =  n^{-1}O_P(K) + O_P( \lambda^2 K^{2(1-q)})  + O_P(K^{-2p}),
\end{align*}
which is very similar to the mean-squared error of convex M-type regression spline estimators obtained by setting $\lambda = 0$ \citep{Shi:1995}. In fact, setting $K \asymp n^{1/(2p+1)}$ and $\lambda \asymp n^{-\gamma}$ with $\gamma > (1+p-q)/(2p+1)$ yields $||\widehat{f} - f_0||^2_2 = O_P(n^{-2p/(2p+1)})$, which is the optimal rate of convergence for regression functions in $\mathcal{C}^p([0,1])$  \citep{Stone:1982}.

On the other hand, for $f \in \mathcal{C}^q([0,1])$ and $K \geq \lambda_K^{-1/2q}$ (equivalently, $ \lambda K  \geq 1$) for all large $n$ we obtain what is often referred to as a large number of knots scenario, namely
\begin{align*}
||\widehat{f}_n - f_0||^2_2   = n^{-1} O_P( K^{1-1/(2q)} \lambda^{-1/2q}  ) + O_P(\lambda K^{1-2q})+   O_P(K^{-2q}).
\end{align*}
Here, setting $\lambda \asymp n^{-1/(2q+1)}$ and $K \asymp n^{\beta}$ with $\beta \geq 1/(2q+1)$ leads to $||\widehat{f} - f_0||^2_2 = O_P(n^{-2q/(2q+1)})$, which is the optimal rate of convergence for regression functions in  $\mathcal{C}^{q}([0,1])$. Since $p>q$, in this asymptotic scenario the number of knots grows at a faster rate, justifying the designation "large number of knots scenario". For generic $q \leq  j \leq p$, $K$ and $\lambda$ the P-spline estimator is situated in between these two asymptotic scenarios and is still rate-optimal provided that $K$ and $\lambda$ are selected appropriately.

It is interesting to observe that while these results share some similarities with the corresponding results for O-splines, the transition point between the two asymptotic scenarios is very different. In particular, whereas for O-splines the transition point between the two asymptotic scenarios depends on the magnitude of $\lambda^{-1/2q} K $, for P-splines the quantity of importance is $ \lambda  K$. In practical terms, this means that the penalty parameter of P-splines will, in general, need to be much larger than the penalty parameter for O-splines in order to ensure the same effective degrees of freedom.

\subsection{M-type P-splines with preliminary scale}

We now turn to the problem of P-spline estimators that depend on an auxiliary scale estimate $\widehat{\sigma}$, such as the robust estimator in~(\ref{eq:6}). The scale estimate $\widehat{\sigma}$ often plays the role of a tuning parameter and is very useful for piecewise-defined loss functions, such as the Huber, Hampel and Tukey $\rho$-functions where it controls the size of residuals that should be given lower weight in the estimation. The required assumptions on the loss function and auxiliary scale estimate for the main result of this section are as follows.

\begin{itemize}
\item[B.3] The loss function $\rho(x)$ has a Lipschitz-continuous derivative $\psi(x)$ and for every $\epsilon>0$ there exists $M_{\epsilon}$ such that 
\begin{align*}
|\psi(tx)-\psi(sx)| \leq M_{\epsilon}|t-s|,
\end{align*}
for all $t,s > \epsilon$ and $-\infty<x<\infty$.
\item[B.4] There exists a $\sigma \in (0,\infty)$ such that $n^{1/2}(\widehat{\sigma}-\sigma) = O_P(1)$.
\item[B.5] $\mathbb{E}\{|\psi(\epsilon_1/\sigma)|^2 \} < \infty$, $\mathbb{E}\{\psi(\epsilon_1/\alpha) \} = 0$ for any $\alpha>0$ and
\begin{align*}
\mathbb{E}\left\{\psi\left(\frac{\epsilon_1}{\alpha} + t \right) \right\} = \xi(\alpha)t + o(t), 
\end{align*}
as $t \to 0$, for $\xi(\alpha)$ satisfying $0< \inf_{|\alpha-\sigma|\leq \delta} \xi(\alpha) \leq \sup_{|\alpha-\sigma|\leq \delta} \xi(\alpha) < \infty$ for some $\delta>0$.
\end{itemize}

Assumption B.3 requires that $\rho$ is continuously differentiable and $\psi$ changes slowly in the tail. This also implies that $\psi$ is bounded. This condition is borrowed from \citep{He:1995} and is satisfied, e.g., by common redescending $\psi$-functions and Huber $\psi$-functions. For differentiable $\psi$-functions it suffices that $\sup_{x} |x\psi^{\prime}(x)| < \infty$. The scaling constant in B.4 does not need to be the standard deviation of $\epsilon_1$, since we do not assume that $\epsilon_1$ possesses a second moment. The conditions in B.5 parallel those in A.6, except that we now require the linearisation of $m_{\alpha}(t) := \mathbb{E}\{\psi(\epsilon_1/\alpha + t) \}$ to hold for all $\alpha$ in a neighbourhood of $\sigma$. This assumption can be shown to be satisfied yet again for a wide variety of $\psi$-functions that are not necessarily smooth. For example, in the case of the Huber $\psi$-function, assuming that $F$ is absolutely continuous and symmetric about the origin, we obtain
\begin{align*}
\mathbb{E}\left\{\psi_k\left(\frac{\epsilon_1}{\alpha} + t \right) \right\} = \{2F(k/\alpha) -1\}t + o(t), \quad \text{as} \quad t \to 0,
\end{align*}
so that we may take $\xi(\alpha) = 2F(k/\alpha)-1$ and this is clearly bounded away from zero and infinity for all $k>0$. 

Adopting the notation of Theorem~\ref{Thm:1} we have the following important result.

\begin{theorem}
\label{Thm:2}
Suppose that assumptions A.1--A.2 and assumptions B.3--B.5 hold as well as $\lim_{n} \lambda= \lim_n n^{-1} K d_{K, \lambda} = \lim_n K s_{K, \lambda} =0$ and $\lim_n n^{\delta-1} K^2 = 0$ for some $\delta>0$. If $f_0 \in \mathcal{C}^{j}([0,1]), \ q \leq  j \leq p$, then there exists a sequence $\widehat{f}_n$ of local minimizers of \eqref{eq:3} with loss function $\rho_{\widehat{\sigma}}(x) = \rho(x/\widehat{\sigma})$ such that
\begin{equation*}
||\widehat{f}_n- f_0||^2_2 =  n^{-1} O_P \left( \min\{K, \lambda_K^{-1/2q} \}  \right) + O_P \left(\min\{ \lambda_K^2 K^{2q}, \lambda_K\} \right) + O_P(K^{-2j}).
\end{equation*}
\end{theorem}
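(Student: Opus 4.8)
The plan is to deduce Theorem~\ref{Thm:2} from Theorem~\ref{Thm:1} by viewing the estimator built from the random loss $\rho_{\widehat{\sigma}}(x)=\rho(x/\widehat{\sigma})$ as a small, root-$n$ perturbation of the estimator built from the idealized loss $\rho_{\sigma}(x)=\rho(x/\sigma)$. The first step is to check that $\rho_{\sigma}$ satisfies A.3--A.6: its derivative $\sigma^{-1}\psi(\cdot/\sigma)$ is bounded and Lipschitz by B.3, which gives A.3 and A.4; A.5 is immediate from the Lipschitz property; and A.6 follows from B.5 evaluated at $\alpha=\sigma$ with $\xi=\sigma^{-2}\xi(\sigma)>0$. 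Since $\psi$ is Lipschitz, the growth restrictions under which Theorem~\ref{Thm:1} applies to $\rho_{\sigma}$ are precisely those assumed here (the clause $\lim_n n^{\delta-1}K^2=0$). Hence Theorem~\ref{Thm:1} already furnishes a local minimizer $\widehat{f}_n^{\sigma}$ of the $\rho_{\sigma}$-objective attaining the claimed rate, and it remains to produce a local minimizer $\widehat{f}_n$ of the $\rho_{\widehat{\sigma}}$-objective with $\|\widehat{f}_n-\widehat{f}_n^{\sigma}\|_2^2=O_P(n^{-1})$. Because $d_{K,\lambda}=\min\{K,\lambda_K^{-1/2q}\}\to\infty$ under A.2 and $\lim_n\lambda=0$, this discrepancy is negligible next to the variance term $n^{-1}O_P(d_{K,\lambda})$, and the triangle inequality then yields the theorem.

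The second step compares the two estimating equations. The first-order condition for the $\rho_{\widehat{\sigma}}$-objective (cf.~\eqref{eq:4}), multiplied through by $\widehat{\sigma}$, reads $-n^{-1}\sum_i\psi((Y_i-\mathbf{B}_{K,p}^{\top}(x_i)\boldsymbol{\beta})/\widehat{\sigma})\mathbf{B}_{K,p}(x_i)+2\lambda\widehat{\sigma}\,\mathbf{P}_q^{\top}\mathbf{P}_q\boldsymbol{\beta}=\mathbf{0}$, and analogously with $\widehat{\sigma}$ replaced by $\sigma$ for $\widehat{f}_n^{\sigma}$. The crucial point is that on the event $\{1/\widehat{\sigma}>\varepsilon\}$, whose probability tends to one by B.4, assumption B.3 yields the \emph{deterministic, $u$-uniform} bound $|\psi(u/\widehat{\sigma})-\psi(u/\sigma)|\le M_{\varepsilon}|1/\widehat{\sigma}-1/\sigma|=O_P(n^{-1/2})$; this uniformity in $u$ sidesteps any dependence between $\widehat{\sigma}$ and the errors $\epsilon_i$. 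Consequently, the difference of the two score vectors at any common $\boldsymbol{\beta}$ is bounded in Euclidean norm by $O_P(n^{-1/2})\,\|n^{-1}\sum_i\mathbf{B}_{K,p}(x_i)\|_2=O_P(n^{-1/2}K^{-1/2})$, using the partition-of-unity property together with $n^{-1}\sum_i B_{j,p}(x_i)=O(K^{-1})$ (which follows from A.1), while the penalty discrepancy $2\lambda(\widehat{\sigma}-\sigma)\mathbf{P}_q^{\top}\mathbf{P}_q\boldsymbol{\beta}$ is $\lambda\,O_P(n^{-1/2})$ in operator norm.

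The third step propagates this perturbation through the linearization underlying the proof of Theorem~\ref{Thm:1}. Both equations share, up to negligible terms, the same local curvature operator $\xi V_n+2\lambda\mathbf{P}_q^{\top}\mathbf{P}_q$ with $V_n=n^{-1}\sum_i\mathbf{B}_{K,p}(x_i)\mathbf{B}_{K,p}^{\top}(x_i)$, whose eigenvalues are of order $K^{-1}$ by the spline inequality underlying Proposition~\ref{prop:1}. Inverting this operator turns the $O_P(n^{-1/2}K^{-1/2})$ perturbation in the estimating equation into an $\mathcal{L}_2$-discrepancy of order $K^{1/2}\cdot n^{-1/2}K^{-1/2}=n^{-1/2}$, and the $\lambda\,O_P(n^{-1/2})$ penalty term contributes at the same or smaller order since $\lambda\to0$. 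A Brouwer-type fixed-point argument localized around $\widehat{f}_n^{\sigma}$, of the same kind used in Theorem~\ref{Thm:1} to handle the spurious local minima a non-convex $\rho$ may create, then produces the desired local minimizer $\widehat{f}_n$ with $\|\widehat{f}_n-\widehat{f}_n^{\sigma}\|_2=O_P(n^{-1/2})$, completing the argument.

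The main obstacle is to carry out the empirical-process remainder bounds and the linearization of Theorem~\ref{Thm:1} \emph{uniformly over scales} $\alpha$ in a fixed neighbourhood of $\sigma$, so that the random $\widehat{\sigma}$ may be substituted at the end. Assumptions B.3 and B.5 are tailored for exactly this: B.3 makes the map $\alpha\mapsto\psi(\cdot/\alpha)$ uniformly Lipschitz in its argument, so the class $\{x\mapsto\psi(x/\alpha):|\alpha-\sigma|\le\delta\}$ has well-controlled covering numbers and the maximal inequalities of Theorem~\ref{Thm:1} extend with $\alpha$ as an additional index; and B.5 guarantees $0<\inf_{|\alpha-\sigma|\le\delta}\xi(\alpha)\le\sup_{|\alpha-\sigma|\le\delta}\xi(\alpha)<\infty$, keeping the curvature operator uniformly invertible. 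Beyond this uniformization over $\alpha$, the proof requires no ideas absent from that of Theorem~\ref{Thm:1}.
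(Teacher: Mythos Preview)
Your proposal actually sketches two strategies. The final paragraph---carrying out the empirical-process bounds of Theorem~\ref{Thm:1} uniformly over scales $\alpha$ in a neighbourhood of $\sigma$ and substituting $\widehat\sigma$ at the end---is precisely what the paper does: it reruns the decomposition $L_{n,\widehat\sigma}(C_n^{1/2}\boldsymbol\gamma)-L_{n,\widehat\sigma}(\mathbf 0)=I_1(\boldsymbol\gamma,\widehat\sigma)+I_2(\boldsymbol\gamma,\widehat\sigma)+I_3(\boldsymbol\gamma)$, treats $I_3$ exactly as before, splits $I_2(\boldsymbol\gamma,\widehat\sigma)$ into a $\psi(\epsilon_i/\sigma)$ part plus a B.3--B.4 remainder, and proves the maximal inequality for $I_1-\mathbb E I_1$ over the enlarged index set $\{\|\boldsymbol\gamma\|_E\le D\}\times[\sigma-\delta,\sigma+\delta]$ via an entropy bound that picks up one extra covering-number factor from the $\alpha$-direction. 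If that is your route, the perturbation layer of Steps~1--3 is superfluous.

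Steps~1--3 themselves aim at something genuinely different---deduce Theorem~\ref{Thm:2} from Theorem~\ref{Thm:1} by a root-$n$ perturbation---but Step~3 has a gap. A Brouwer argument on a sphere of radius $\asymp n^{-1/2}$ about the \emph{random} point $\widehat{\boldsymbol\beta}^{\sigma}$ would require $L_{n,\sigma}$ to grow quadratically around its own minimizer at that finer scale; Theorem~\ref{Thm:1} only asserts the existence of a local minimizer inside the $DC_n^{1/2}$-ball about $\boldsymbol\beta^{\star}$ and provides no second-order information at a random center or at radius $n^{-1/2}$. Supplying that growth would again drag you into empirical-process remainder control, which is the very work you are trying to avoid. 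There is, however, a cleaner realisation of your perturbation idea that does bypass the $\alpha$-uniformization entirely: stay on the \emph{original} sphere of Theorem~\ref{Thm:1}'s proof (radius $DC_n^{1/2}$ about $\boldsymbol\beta^{\star}$) and bound, using only B.3--B.4 and boundedness of $\psi$,
\[
\sup_{\|\boldsymbol\gamma\|_E=D}\Bigl|\bigl[L_{n,\widehat\sigma}(C_n^{1/2}\boldsymbol\gamma)-L_{n,\widehat\sigma}(\mathbf 0)\bigr]-\bigl[L_{n,\sigma}(C_n^{1/2}\boldsymbol\gamma)-L_{n,\sigma}(\mathbf 0)\bigr]\Bigr|=O_P(n^{-1/2})\,DC_n^{1/2},
\]
which is $o_P(D^2C_n)$ since $C_n\ge n^{-1}d_{K,\lambda}$ and $d_{K,\lambda}\to\infty$. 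The ``larger on the sphere'' conclusion then transfers directly from $L_{n,\sigma}$ (already established by Theorem~\ref{Thm:1}'s proof applied to $\rho_\sigma$, your Step~1) to $L_{n,\widehat\sigma}$, yielding the required local minimizer at the claimed rate with no new entropy calculation. This is shorter than both the paper's argument and your Steps~1--3.
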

The theorem states that standardization with a fast-converging scale estimate does not change the asymptotic properties of M-type P-spline estimators. \citet{Ghement:2008} have shown that the M-scale estimator in \eqref{eq:5} satisfies the root-n assumption in B.4, while at the same time providing good protection against outlying observations. Thus, it provides an effective means of standardization for robust P-spline estimators.

\section{A Monte-Carlo study}
\label{sec:4}

To examine the finite-sample performance of M-type P-spline estimators we consider two representative estimators in this class, namely the convex Huber and the non-convex Tukey P-spline estimators and compare their performance with the popular least-squares P-spline estimator.  The least-squares estimator admits a closed-form solution while general M-type estimators can be computed efficiently through the iteratively reweighted least-squares method proposed by \citet{Kalogridis:2021}. For convex M-estimators the starting value for the algorithm is immaterial as the algorithm can be shown to converge to the solution of \eqref{eq:4} regardless of the starting value. However, for non-convex M-estimators a robust starting value is important as without it the algorithm may converge to a local minimum of \eqref{eq:3}. Thus, for the Huber M-estimate we have used the least-squares estimate as the starting value whereas for the Tukey M-estimate we have relied on the Huber M-estimate for the starting value.

The performance of most non-parametric estimators crucially depends on the smoothing parameter, $\lambda$ in the present work given that $K$ is large but fixed. To select this penalty parameter we have made use of the weighted generalized cross-validation (GCV) criterion
\begin{align*}
\GCV(\lambda) = n^{-1} \sum_{i=1} W_i(\boldsymbol{\widehat{\beta}}_n) \frac{|Y_i - \mathbf{B}_{K,p}^{\top}(x_i) \boldsymbol{\widehat{\beta}}_n |^2}{|1-n^{-1} \Tr \mathbf{H}(\lambda)|^2},
\end{align*}
with $\mathbf{H}(\lambda)$ the pseudo-influence matrix obtained upon convergence of the algorithm and $W_i(\boldsymbol{\beta})$ the weights generated by the estimator. That is, $W_i(\boldsymbol{\beta}) = \psi(r_i(\boldsymbol{\beta}))/r_i(\boldsymbol{\beta})$ with $r_i(\boldsymbol{\beta}) = Y_i - \mathbf{B}_{K,p}^{\top}(x_i) \boldsymbol{\beta}, i = 1, \ldots, n$. We select $\lambda$ as the minimizer of $\GCV(\lambda)$. Throughout the simulation experiments and real-data example to follow we have adopted a
two-step approach in order to identify this minimizer. First, we have determined the approximate location of the minimizer by evaluating GCV($\lambda$) on a grid of $\lambda$ values and afterwards employed a numerical optimizer in the neighborhood of the optimum. Such a hybrid approach is often advisable due to the possible local minima and near-flat regions of the GCV criterion, particularly for non-smooth loss functions.

In our simulation experiments and real-data example we have used  for all estimators a cubic B-spline basis generated by $40$ equidistant knots in the interior of $[0,1]$. These choices result in a rich spline subspace of twice continuously differentiable functions. The order of the penalty, $q$, is set equal to $2$, which is a popular choice among practitioners \citep{Eilers:1996}. For the robust estimators we have standardized the losses using the M-scale given in \eqref{eq:5} and selected values for the tuning parameters that yield $95\%$  efficiency in the location model under Gaussian errors. The least-squares estimator does not require an auxiliary scale estimator. 

To investigate the performance of the estimators we consider the regression model $Y_i = f(x_i) + 0.5\epsilon_i$ where $x_i = i/n$ and $f$ is either of the following functions
\begin{enumerate}
\item $f_1(x) =  \cos(2 \pi x)$,
\item $f_2(x) = 3 \arctan(10(x-0.5))$,
\item $f_3(x) = \phi((x-0.3)/0.1)-\phi((x-0.8)/0.04)$,
\end{enumerate}
where $\phi$ denotes the Gaussian density. All three functions are smooth, but have different shapes, $f_1$ is bowl-shaped, $f_2$ is essentially a sigmoid, while $f_3$ has bumps at $0.3$ and $0.8$. Due to their local characteristics, $f_2$ and $f_3$ are more difficult to estimate than $f_1$.

\begin{table}[H]
\centering
\begin{tabular}{l c c c c c  c c c c}
\hline
& & \multicolumn{2}{c}{LS} & \multicolumn{2}{c}{Huber} & \multicolumn{2}{c}{Tukey} \\ 
$f$ & Error Distribution & Mean & Median & Mean & Median & Mean & Median \\ \cline{1-8} 
\multirow{4}{*}{$f_1$} & Gaussian &  0.032 & 0.025   &  0.031 & 0.024  & 0.032 & 0.024  \\
 & $t_3$ & 0.085 &  0.054  &  0.046  & 0.034   & 0.046  &  0.034  \\
 & $st_{3,0.5}$ & 0.135 &  0.107 & 0.086 & 0.076  & 0.082 & 0.073 \\
 & Mixture Gaussian & 0.321  &  0.197 & 0.053  & 0.040  &  0.043 & 0.034   \\ 
 & Slash & 10458  & 1.661 & 0.165  & 0.129  & 0.145 & 0.103 \\ \cline{1-8}
\multirow{4}{*}{$f_2$} & Gaussian & 0.041  & 0.034   & 0.040  & 0.035  & 0.042 & 0.036  \\
 & $t_3$ &  0.097 & 0.074  & 0.058  & 0.053  &  0.058 & 0.052 \\
 & $st_{3,0.5}$ & 0.150 & 0.127 & 0.101 & 0.093 & 0.098 & 0.092 \\
 & Mixture Gaussian & 0.400  & 0.274  & 0.078  & 0.068 &   0.055 &  0.047  \\ 
 & Slash &  1290 & 2.255 &  0.284  & 0.243 &  0.199  &  0.167  \\ \cline{1-8}
\multirow{4}{*}{$f_3$} & Gaussian & 0.033  & 0.027   & 0.029  & 0.026  &0.030  & 0.026   \\
 & $t_3$ & 0.078 &  0.050 &  0.040 &0.033   & 0.040 &  0.033   \\
 & $st_{3,0.5}$ & 0.126 & 0.106 & 0.082 & 0.073 & 0.077 & 0.070 \\
 & Mixture Gaussian & 0.295 & 0.160 & 0.045  &  0.035  & 0.040 & 0.032    \\ 
 & Slash & 375.3 & 1.274 &  0.127 &  0.076 & 0.110  &   0.069 \\ \cline{1-8}
\end{tabular}
\caption{Means and medians of 1000 MSEs for the least-squares, Huber and Tukey P-spline estimators.}
\label{tab:1}
\end{table}

In order to assess the robustness of the estimators, we have generated the errors according to $5$ different distributions. Next to the standard Gaussian distribution, we also consider a t-distribution with 3 degrees of freedom ($t_3$), a skewed t-distribution with $3$ degrees of freedom and non-centrality parameter equal to $0.5$ ($st_{3,0.5}$), and a mixture of mean-zero Gaussians with standard deviations equal to 1 and 9 and weights equal to 0.85 and 0.15, respectively. Finally, we also used Tukey’s Slash distribution, which is defined as the quotient of a standard Gaussian random variable and an independent standard uniform random variable. To evaluate the performance of an estimator $\widehat{f}_n$ we use the "discretized" mean-squared error given by
\begin{align*}
\MSE = \frac{1}{n} \sum_{i=1}^n |\widehat{f}_n(x_i)-f(x_i)|^2.
\end{align*}
Table~\ref{tab:1} presents the mean and median of the $\MSE$s for 1000 samples of size 60.

The results in Table~\ref{tab:1} confirm the extreme sensitivity of the least-squares estimator to even mild deviations from the Gaussian distribution. In particular, while all three estimators behave roughly the same when the errors follow a Gaussian distribution, the performance of the least-squares estimator markedly deteriorates if the errors follow a slightly more heavy-tailed distribution, such as the $t_3$-distribution. More severe contamination aggravates the problem and as a result the least-squares estimator becomes completely unreliable. On the other hand, the robust estimators maintain a steady overall performance over the error distributions. The extreme Slash distribution somewhat affects the robust estimators, but the effect is very small compared to the least squares estimator. Finally, in the two last more severe contamination scenarios, the non-convex P-spline M-estimator based on Tukey loss clearly outperforms its convex counterpart based on Huber loss. 

\begin{figure}[H]
\centering
\subfloat{\includegraphics[width = 0.495\textwidth]{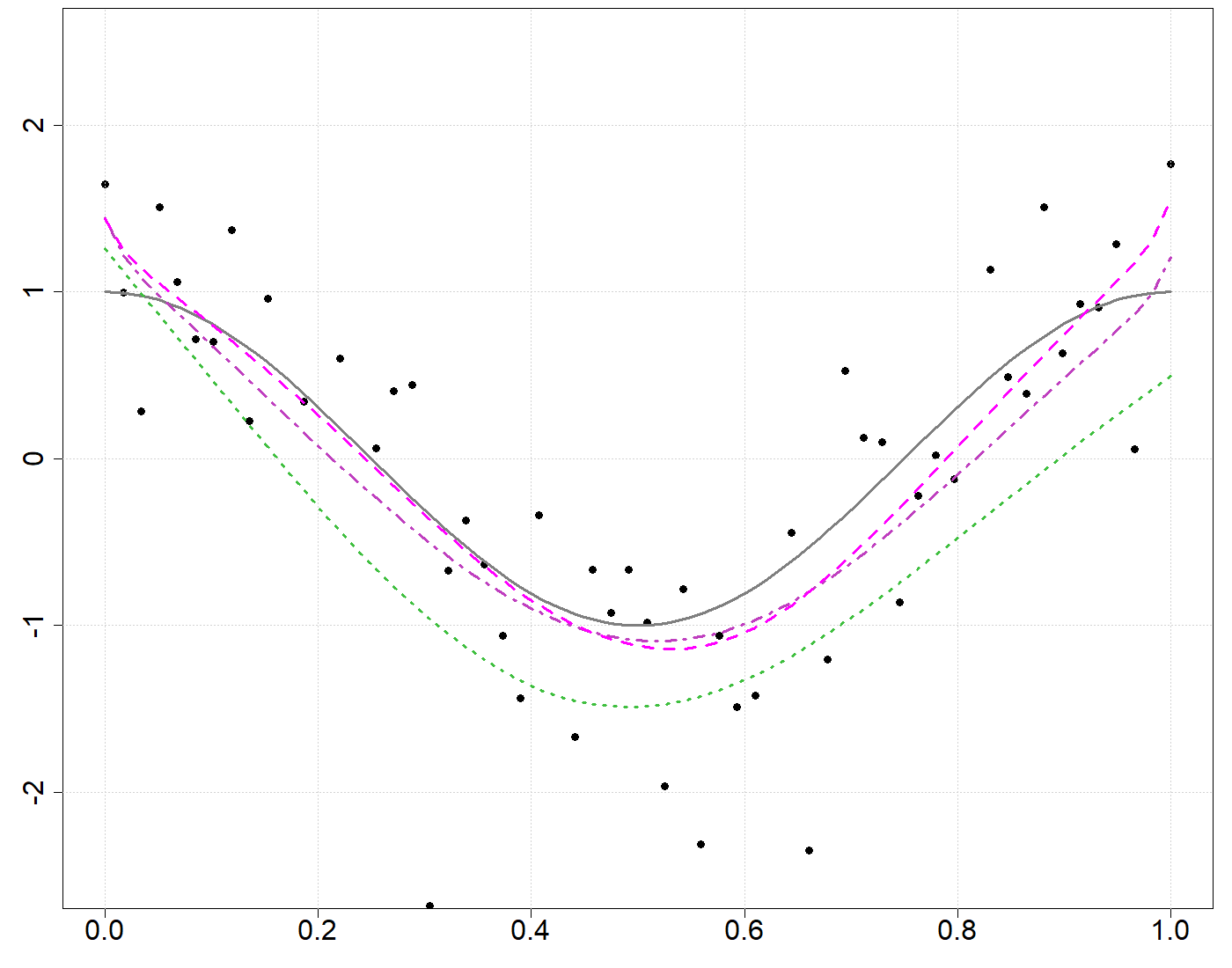}} \ 
\subfloat{\includegraphics[width = 0.495\textwidth]{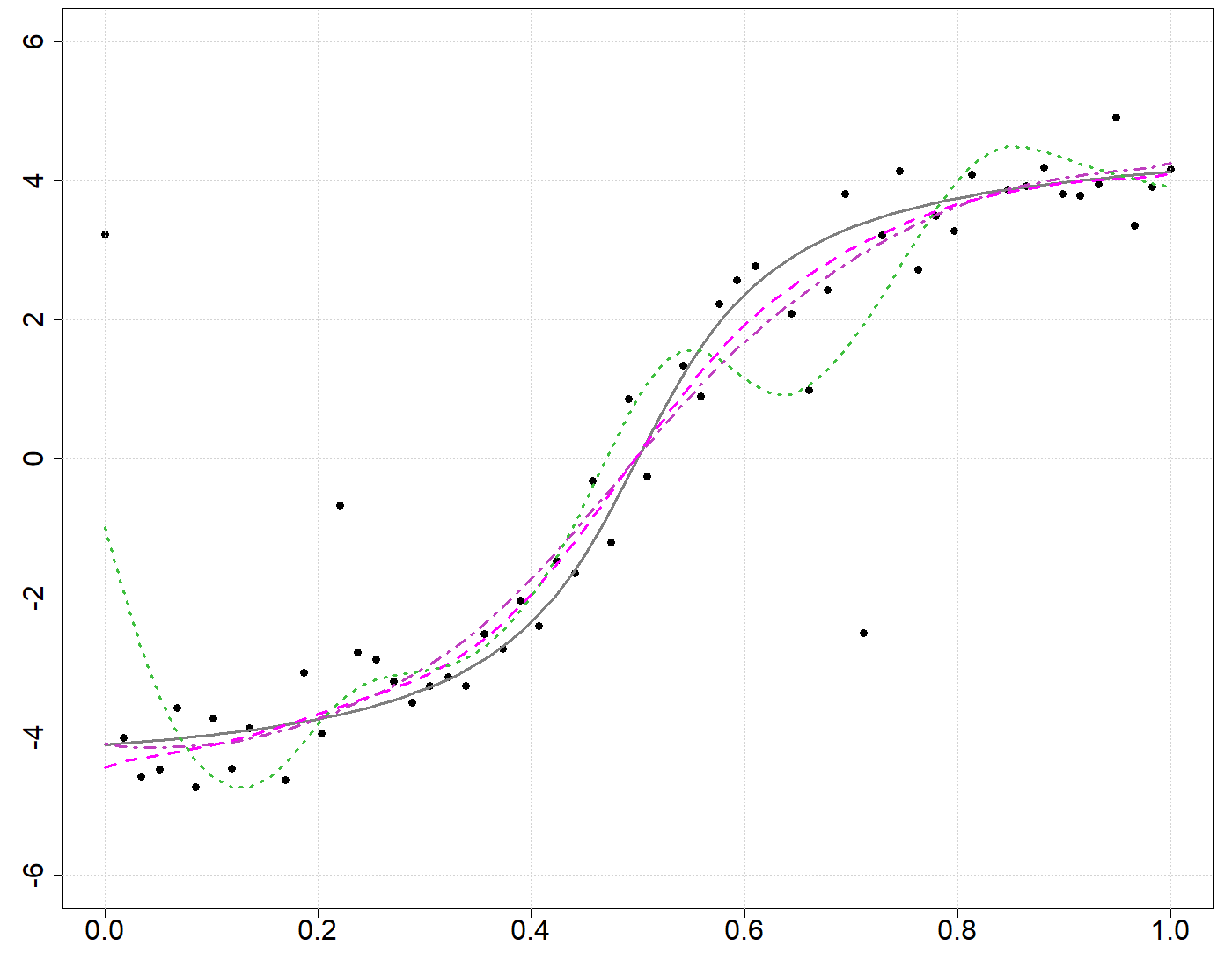}}
\caption{Typical datasets with $f_1$ (left) and $f_2$ (right) as regression functions and the mixture of Gaussians distribution for the error. The lines (\full, \dottedd, \dotdashh, \denselydashed) correspond to the true function, the least-squares, Huber and Tukey P-spline estimators respectively.}
\label{fig:1}
\end{figure}

For a better understanding of the effects of large errors on the least-squares estimator, Figure~\ref{fig:1} presents two typical datasets under the Gaussian mixture error distribution for the first and second regression function, respectively. The plots suggest that the occurrence of large errors has a strong impact not only on the estimated regression function, but also on its smoothness. Under heavy contamination the least-squares estimator tends to either oversmooth or undersmooth, thus concealing essential characteristics of the data. This observation highlights the need not only for robust estimation but also for robust selection of the smoothing parameter (see \citet{Cantoni:2001} for a similar remark).

\section{Application: Historical CO$_2$ emissions in Belgium}
\label{sec:5}

It is well-known that Belgium was one of the first countries in mainland Europe to adopt the new manufacturing processes that characterized the first industrial revolution between late 18th and early 19th centuries. Inevitably, this has led to increased carbon dioxide (CO$_2$) emissions ever since. Figure~\ref{fig:2} presents a scatter plot of the CO$_2$ emissions as a function of the year from 1830, the year of independence of Belgium from the Netherlands, to 2018. Since it is often of interest to both explain and predict the level of emissions, the panel also includes the Tukey, Huber and least-squares P-spline estimators for the overall trend.

The plot suggests that in the aftermath of the industrial evolution CO$_2$ emissions were in a rather steep upward climb that lasted until the late 1970s. Important intermissions in this trend include the two world wars that created such damage to the industry that it took years to return to its prewar output. The expansion of the industry seems to have halted in the 1970s presumably as a result of the twin oil crisis in 1973 that greatly affected the European economies. Interestingly, starting from the 2000s, emissions have been declining and in fact the level of emissions in 2018 matches the level of emissions of the 1960s. In economics it is common to view the two world wars and the two oil crises as temporary shocks. These shocks cause emissions to deviate from their long-run trend so an estimator for the long term pattern should not be overly attracted to these large deviations.

\begin{figure}[H]
\centering
\includegraphics[width = 0.99\textwidth]{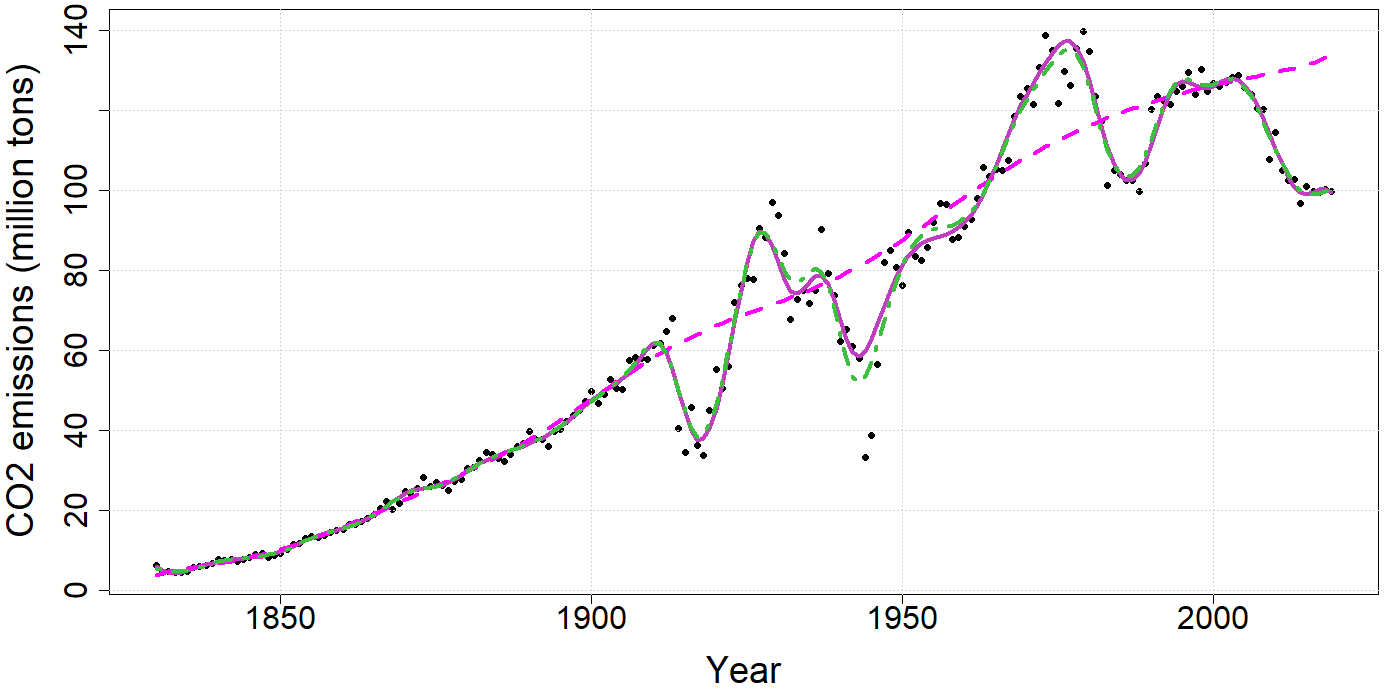}
\caption{Scatter plot of the yearly CO$_2$ emissions versus the year with the Tukey, Huber and LS estimators. The lines (\dotted, \dotdash, \denselydashed) correspond to the least-squares, Huber and Tukey P-spline estimators, respectively. }
\label{fig:2}
\end{figure}

As expected, the least-squares estimator is completely pulled towards the years of unnaturally low level of emissions. What is rather surprising here is that the convex Huber P-spline estimator also demonstrates little resistance to these large deviations, as it produces an estimate that mostly resembles the least-squares estimate. By contrast, the Tukey P-spline estimator exhibits a high degree of resistance, effectively ignoring the shocks and representing the main trend in the data. A likely explanation for this difference is that while the Huber estimator is resistant to a few isolated outliers, it is still susceptible to clusters of outliers. The non-convex Tukey estimator, on the other hand, has a finite rejection point \citep{Hampel:2011} and thus assigns a zero weight to such clusters of outliers resulting in reliable estimates in their presence.

An important benefit of robust estimators is their ability to detect large deviations from the fit by examining the residuals. In particular, since robust estimators are not attracted by outlying observations, these result in large residuals and one can identify them, for example, from a QQ plot of the residuals shown in Figure~\ref{fig:3}. To detect outlying observations, we compare the residuals to a normal distribution. Assuming a normal distribution for the errors corresponding to the majority of regular data is a common approach in robust statistics which often works well in practice to identify large deviations \citep[see e.g.][]{Maronna:2006}.  Clearly, this QQ plot reveals the presence of numerous outlying observations through the past 190 years, most notable of which are the years 1929, 1944, 1945 and 1973. The year 1929 is rather important, since it marks the beginning of the "Great Depression". However, in absence of the residual analysis of a robust estimator, one would only be able to identify it from a very close examination of the scatter plot of the data.

\begin{figure}[H]
\includegraphics[width = 0.99\textwidth]{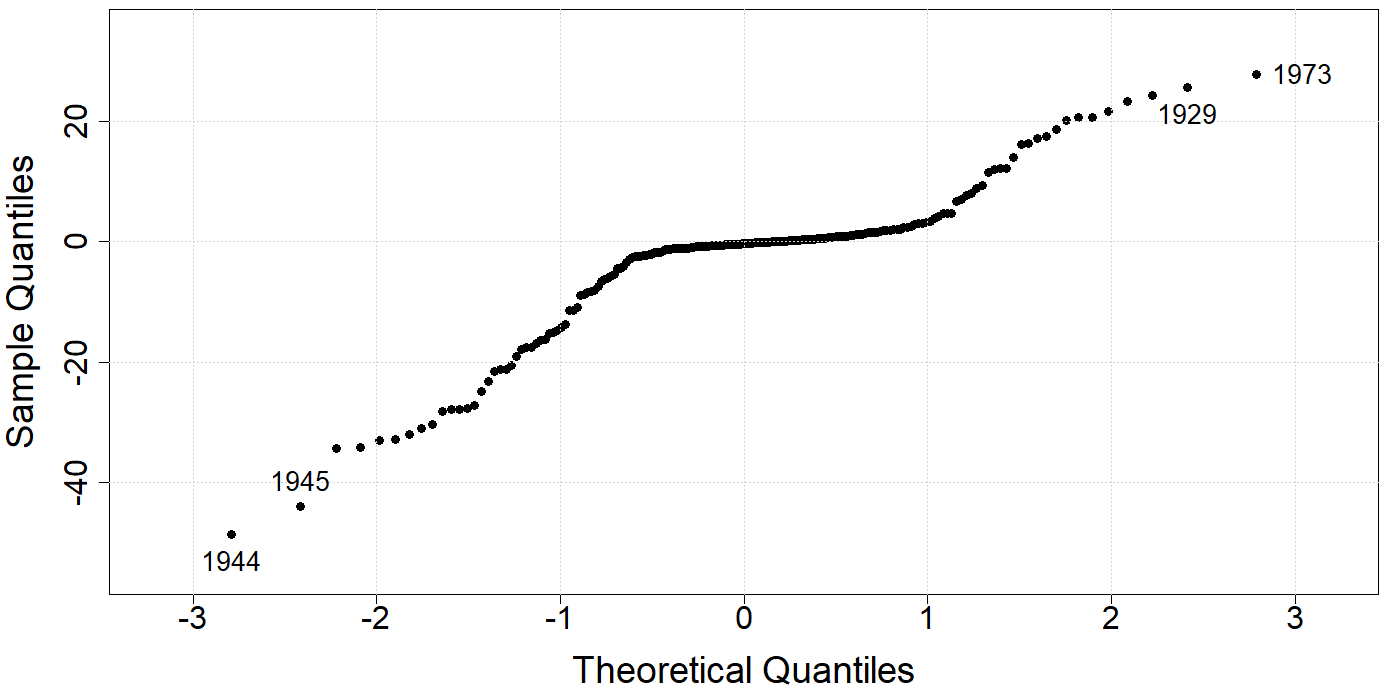}
\caption{Gaussian QQ-plot of the residuals of the Tukey P-spline estimator.}
\label{fig:3}
\end{figure}

\section{Concluding remarks}
\label{sec:6}

The present paper provides theoretical and practical justification for P-spline estimators based on a large class of loss functions. For well-chosen resistant loss functions, only weak assumptions are required to establish the same rates of convergence as for the least squares estimator. Moreover, an appropriate preliminary scale estimate can be included in the loss function which is also useful for outlier detection, as demonstrated in our real-data example. There are several directions worth pursuing from here, of which generalization to random designs and/or higher dimensions seem to come the most natural. To the best of our knowledge, robust penalized estimators in higher dimensions are virtually non-existent in the literature so that such contributions would fill an important void while at the same time providing valuable tools for the applied scientist.

Another important area where robust penalized spline estimators would be successful is functional data analysis, be it in the form of location and dispersion estimation or regression. In the latter case, P-spline estimators constitute natural alternatives to methods based on principal components \citep[see e.g.][]{Kalogridis:2019}. Since, unlike principal components, the B-spline basis is not data-dependent, the use of P-splines in this context would likely lead to more computationally convenient and stable estimators. We aim to explore these interesting directions in future work.

\section*{Acknowledgements}

The authors are grateful to two anonymous referees, the associate editor and the editor for constructive comments and suggestions that lead to a much improved paper in terms of accessibility and content. 
This research was supported by grant C16/15/068 of Internal Funds KU Leuven. Their support is gratefully acknowledged.

\section*{Appendix: Proofs of the theoretical results}

Throughout the appendix we denote $\mathbf{G}_{\lambda} := \mathbf{H} + \lambda \mathbf{P}_q^{\top} \mathbf{P}_q$ with $\mathbf{H} = n^{-1} \sum_{i=1}^n \mathbf{B}_{K,p}(x_i) \mathbf{B}_{K,p}^{\top}(x_i)$ in the notation of Section \ref{sec:2}. To simplify the notation we drop the subscript from the $(K+p)$-dimensional B-spline vector from now on, i.e. we write $\mathbf{B}(x_i)$ instead of $\mathbf{B}_{K,p}(x_i)$. We further use $||\cdot||_{E}$ to denote the Euclidean norm on $\mathbb{R}^{K+p}$, $||\cdot||$ to denote the spectral norm of a square matrix and $||\cdot||_n$ to denote the empirical norm, that is, $||f||_n^2 = n^{-1} \sum_{i=1}^n |f(x_i)|^2$. Generic positive constants are denoted by $c_0$. 

\begin{lemma}
\label{Lem:1}
For each $f \in \mathcal{C}^{j}([0,1])$ there exists a spline function $s_f$ of order $p$ with $p >j$ such that
\begin{equation*}
\sup_{x \in [0,1]} | f(x) - s_f(x) | \leq c_0 |\mathbf{t}|^{j} \sup_{|x-y|<\mathbf{t}} |f^{(j)}(x)-f^{(j)}(y)|,
\end{equation*}
where $t_i$ are the knots,  $\mathbf{t} = \max_i|t_i - t_{i-1}|$ is the maximum distance of adjacent knots and the constant $c_0$ depends only on $p$ and $j$.
\end{lemma}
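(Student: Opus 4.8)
The plan is to construct $s_f$ by applying a bounded, local, polynomial-reproducing \emph{quasi-interpolation} operator to $f$ and then control the resulting error by Taylor's theorem. First I would invoke the existence of a linear operator $Q\colon \mathcal{C}([0,1]) \to S_K^p$ — for instance a quasi-interpolant of the type discussed in \citep{DB:2001} or \citep{Schumaker:2007}, built from local function evaluations so as not to require derivatives of $f$ — enjoying two properties: (i) \emph{polynomial reproduction}, $Q\pi = \pi$ for every polynomial $\pi$ of degree at most $p-1$; and (ii) \emph{locality with a mesh-independent bound}: for each $x$ there is an interval $I_x \ni x$ of length at most $c_0\mathbf{t}$ such that $(Qg)(x)$ depends only on $g|_{I_x}$ and $|(Qg)(x)| \le c_0 \sup_{y \in I_x}|g(y)|$, with $c_0$ depending only on $p$. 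I would then set $s_f := Qf \in S_K^p$; property (i) is usable because $p>j$ forces $j \le p-1$, so the degree-$j$ Taylor polynomials of $f$ belong to the exactness class of $Q$.

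Next I would fix $x \in [0,1]$, let $P$ be the degree-$j$ Taylor polynomial of $f$ at $x$, i.e. $P(y) = \sum_{k=0}^{j} f^{(k)}(x)(y-x)^k/k!$, and write $\omega_j(\delta) := \sup_{|u-v| \le \delta}|f^{(j)}(u)-f^{(j)}(v)|$ for the modulus of continuity of $f^{(j)}$. Since $f \in \mathcal{C}^{j}([0,1])$, the integral form of the Taylor remainder gives, for every $y$,
\begin{equation*}
f(y) - P(y) = \frac{1}{(j-1)!}\int_x^y (y-u)^{j-1}\bigl[f^{(j)}(u) - f^{(j)}(x)\bigr]\,du ,
\end{equation*}
whence $|f(y)-P(y)| \le |y-x|^{j}\omega_j(|y-x|)/j!$. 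Using linearity of $Q$ together with $QP = P$ and then property (ii),
\begin{align*}
|f(x) - s_f(x)|
&= \bigl|(f-P)(x) - \bigl(Q(f-P)\bigr)(x)\bigr| \\
&\le |(f-P)(x)| + c_0 \sup_{y \in I_x}|(f-P)(y)| \le (1+c_0)\sup_{y \in I_x}|(f-P)(y)| ,
\end{align*}
and since $|y-x| \le c_0\mathbf{t}$ on $I_x$ while $\omega_j$ is nondecreasing, the remainder estimate gives $|f(x)-s_f(x)| \le (1+c_0)(c_0\mathbf{t})^{j}\omega_j(c_0\mathbf{t})/j!$.

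Finally I would invoke the subadditivity bound $\omega_j(m\delta) \le m\,\omega_j(\delta)$ for positive integers $m$ to replace the argument $c_0\mathbf{t}$ of the modulus by $\mathbf{t}$ at the cost of a further factor $\lceil c_0\rceil$, absorb all the resulting constants into a single $c_0 = c_0(p,j)$, and take the supremum over $x \in [0,1]$; this yields the claimed inequality. I expect the only substantive ingredient to be the existence of $Q$ with bounds independent of the (possibly nonuniform) mesh — which is classical and, for the equidistant knots used in the paper, especially transparent — so the rest of the argument is elementary. An essentially equivalent alternative is to quote the Jackson-type estimate $\mathrm{dist}_{\infty}(f, S_K^p) \le c_0\mathbf{t}^{j}\omega_j(\mathbf{t})$ directly from \citep{DB:2001} or \citep{Schumaker:2007}.
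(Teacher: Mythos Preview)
Your argument is correct and is essentially the classical quasi-interpolation proof of the Jackson-type bound that appears in \citet[pp.~145--149]{DB:2001}. The paper does not give its own proof of this lemma; it simply cites those pages of de~Boor, so your proposal in fact supplies the details behind the paper's reference rather than offering a different route.
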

\begin{proof}
\quad See \citet[pp.145-149]{DB:2001}. 
\end{proof}

\begin{lemma}
\label{Lem:2}
Assume equidistant knots and conditions A.1--A.2. Then there exists a positive constant $c_0$ such that
\begin{equation*}
c_0 K^{-1}\{1+o(1)\} \leq \lambda_{\min}(\mathbf{G}_{ \lambda}),
\end{equation*}
where $\lambda_{\min}(\mathbf{G}_{ \lambda})$ denotes the smallest eigenvalue of $\mathbf{G}_{\lambda}$.
\end{lemma}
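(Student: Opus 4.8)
The plan is to throw away the (positive semidefinite) penalty and reduce to a lower bound on $\lambda_{\min}(\mathbf{H})$, and then to compare the empirical Gram matrix $\mathbf{H}$ with its population analogue $\mathbf{H}_Q := \int_0^1 \mathbf{B}(x)\mathbf{B}^{\top}(x)\,dQ(x)$, whose eigenvalues are readily seen to be of exact order $K^{-1}$. For the reduction, note that $\mathbf{P}_q^{\top}\mathbf{P}_q \succeq 0$ gives $\mathbf{a}^{\top}\mathbf{G}_{\lambda}\mathbf{a} \ge \mathbf{a}^{\top}\mathbf{H}\mathbf{a}$ for every $\mathbf{a} \in \mathbb{R}^{K+p}$, hence $\lambda_{\min}(\mathbf{G}_{\lambda}) \ge \lambda_{\min}(\mathbf{H})$, and it suffices to show $\lambda_{\min}(\mathbf{H}) \ge c_0 K^{-1}\{1+o(1)\}$.

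For the population matrix, write $s_{\mathbf{a}} = \sum_j a_j B_{j,p}$ for $\mathbf{a} \in \mathbb{R}^{K+p}$. Then $\mathbf{a}^{\top}\mathbf{H}_Q\mathbf{a} = \int_0^1 |s_{\mathbf{a}}(x)|^2 w(x)\,dx \ge (\inf_x w)\,\|s_{\mathbf{a}}\|_2^2$, and Theorem 5.4.2 of \citep{Devore:1993}, applied now to order-$p$ splines on the equidistant mesh exactly as it is used just before Proposition~\ref{prop:1} for order $p-q$, yields a constant $c_1>0$ depending only on $p$ with $\|s_{\mathbf{a}}\|_2^2 \ge c_1 K^{-1}\|\mathbf{a}\|_E^2$. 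Since $w$ is bounded away from zero by A.1, this gives $\lambda_{\min}(\mathbf{H}_Q) \ge c_0 K^{-1}$ for some $c_0>0$.

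The main step is to show $\|\mathbf{H} - \mathbf{H}_Q\| = o(K^{-1})$. Entrywise, $(\mathbf{H} - \mathbf{H}_Q)_{jk} = \int_0^1 B_{j,p}(x) B_{k,p}(x)\, d(Q_n - Q)(x)$. Integrating by parts in this Riemann--Stieltjes integral and using $Q_n(1) = Q(1)$, the boundary term is $O(\sup_x|Q_n(x)-Q(x)|)$ and the remaining term is bounded by $\sup_x|Q_n(x)-Q(x)|\cdot \int_0^1|(B_{j,p}B_{k,p})'(x)|\,dx$. Because B-splines are bounded by $1$ and, by the differentiation formula for B-splines together with the identity $\int_0^1 B_{j,p} = (t_{j+p}-t_j)/p$ (see \citep{DB:2001}), each $B_{j,p}$ has total variation bounded by a constant depending only on $p$, one gets $\int_0^1|(B_{j,p}B_{k,p})'| \le c_0$ uniformly. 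Hence, by A.1, $|(\mathbf{H}-\mathbf{H}_Q)_{jk}| = o(K^{-1})$ uniformly in $j,k$; moreover the local-support property (b) forces this entry to vanish unless $|j-k|<p$, so $\mathbf{H}-\mathbf{H}_Q$ has at most $2p-1$ nonzero entries per row. From $\|\mathbf{M}\| \le \max_j \sum_k |\mathbf{M}_{jk}|$ we conclude $\|\mathbf{H}-\mathbf{H}_Q\| = o(K^{-1})$. Weyl's inequality then gives $\lambda_{\min}(\mathbf{H}) \ge \lambda_{\min}(\mathbf{H}_Q) - \|\mathbf{H}-\mathbf{H}_Q\| \ge c_0 K^{-1} - o(K^{-1}) = c_0 K^{-1}\{1+o(1)\}$, which with the reduction step proves the lemma.

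The hard part is precisely this comparison of $\mathbf{H}$ with $\mathbf{H}_Q$: converting the uniform closeness $\sup_x|Q_n(x)-Q(x)| = o(K^{-1})$ of A.1 into an $o(K^{-1})$ operator-norm bound on $\mathbf{H}-\mathbf{H}_Q$. The two ingredients that make this go through without picking up a spurious factor of $K$ are the uniform bound on the total variation of a B-spline, which keeps the integration-by-parts estimate at the scale $\sup_x|Q_n-Q|$ rather than $K\sup_x|Q_n-Q|$, and the bandedness of $\mathbf{H}-\mathbf{H}_Q$ inherited from the local support of the basis, which converts the entrywise bound into the same bound on the spectral norm. Everything else (positive semidefiniteness of the penalty, the $\mathcal{L}_2$ norm equivalence already quoted before Proposition~\ref{prop:1}, and Weyl's inequality) is routine.
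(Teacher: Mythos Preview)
Your proof is correct. The paper does not actually prove this lemma but simply refers to Lemma~6.1 of \citet{Shen:1998}; what you have supplied is a self-contained version of that standard argument---reduce to $\lambda_{\min}(\mathbf{H})$ via positive semidefiniteness of the penalty, lower-bound the population Gram matrix $\mathbf{H}_Q$ through the B-spline $\mathcal{L}_2$ stability inequality already invoked before Proposition~\ref{prop:1}, and control $\|\mathbf{H}-\mathbf{H}_Q\|$ entrywise by Riemann--Stieltjes integration by parts, with the uniformly bounded total variation of B-splines and the bandedness from local support preventing the appearance of an extra factor of $K$.
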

\begin{proof}
\quad See Lemma 6.1 of \citet{Shen:1998}.
\end{proof}

\begin{proof}[Proof of Theorem~\ref{Thm:1}]

Let us write $f^{\star}$ for the spline approximation of $f_0$ constructed with the help of Lemma \ref{Lem:1}. Since B-splines form a basis for $S_{K}^p$, $f^{\star} = \sum_{j} \beta_j^{\star} B_j$. Further, let $L_n(\boldsymbol{\beta})$ denote the objective function, that is,
\begin{align*}
L_n(\boldsymbol{\beta}) &= \frac{1}{n} \sum_{i=1}^n \rho(Y_i - \mathbf{B}^{\top}(x_i) \boldsymbol{\beta}) + \lambda \boldsymbol{\beta}^{\top}\mathbf{P}_{q}^{\top} \mathbf{P}_q \boldsymbol{\beta}
 \\ & = \frac{1}{n}\sum_{i=1}^n \rho\left(\epsilon_i + R_i+ \mathbf{B}^{\top}(x_i)(\boldsymbol{\beta}^{\star}- \boldsymbol{\beta}) \right) + \lambda \boldsymbol{\beta}^{\top}\mathbf{P}_{q}^{\top} \mathbf{P}_q \boldsymbol{\beta},
\end{align*}
where $R_i = f_0(x_i) - f^{\star}(x_i), \  i=1, \ldots, n$. Since, by Lemma~\ref{sec:2}, $\mathbf{G}_{\lambda}$ is non-singular for all large $n$ we may reparametrize by setting $\boldsymbol{\gamma} = \mathbf{G}_{\lambda}^{1/2} (\boldsymbol{\beta}^{\star} - \boldsymbol{\beta})$ so that the objective function may be equivalently written as
\begin{align*}
L_n(\boldsymbol{\gamma} ) &  = \frac{1}{n} \sum_{i=1}^n \rho\left(\epsilon_{i} + R_i + \mathbf{B}^{\top}(x_i) \mathbf{G}_{\lambda}^{-1/2} \boldsymbol{\gamma} \right) + \lambda \boldsymbol{\gamma}^{\top} \mathbf{G}_{\lambda}^{-1/2} \mathbf{P}_{q}^{\top} \mathbf{P}_q \mathbf{G}_{\lambda}^{-1/2} \boldsymbol{\gamma}	 \\ &\phantom{{}=1} + \lambda \boldsymbol{\beta}^{\star \top} \mathbf{P}_{q}^{\top}  \mathbf{P}_{q} \boldsymbol{\beta}^{\star} -2 \lambda \boldsymbol{\gamma}^{\top} \mathbf{G}_{\lambda}^{-1/2} \mathbf{P}_{q}^{\top}  \mathbf{P}_{q} \boldsymbol{\beta}^{\star}.
\end{align*}

Since maximizing $L_n(\boldsymbol{\gamma})$ is equivalent to maximizing $L_n(\boldsymbol{\beta})$, we will show that for every $\epsilon>0$ there exists a $D = D_{\epsilon} \geq 1$ such that 
\begin{equation}
\label{eq:8}
\lim_{n \to \infty} \Pr\left( \inf_{||\boldsymbol{\gamma}||_{E} = D} L_n(C_n^{1/2}\boldsymbol{\gamma}) > L(\boldsymbol{0})  \right) \geq 1-\epsilon,
\end{equation}
where $C_n =  n^{-1}\min(K, \lambda_K^{-1/2q}) + \min(\lambda_K^2 K^{2q}, \lambda_K) + K^{-2j}$. This result then implies that for all large $n$ there exists a local minimizer $\widehat{\boldsymbol{\gamma}}$ in the ball $\{\boldsymbol{\gamma} \in \mathbb{R}^{K+p}: ||\boldsymbol{\gamma}||_{E} \leq D C_n^{1/2} \}$, with probability at least $1-\epsilon$. Using the one-to-one relation between $\boldsymbol{\gamma}$ and $\boldsymbol{\beta}$ we further obtain
\begin{align*}
||\widehat{f}-f^{\star}||_n^2 & \leq ||\widehat{f}-f^{\star}||_n^2 + \lambda(\widehat{\boldsymbol{\beta}} - \boldsymbol{\beta}^{\star})^{\top} \mathbf{P}_q^{\top} \mathbf{P}_q(\widehat{\boldsymbol{\beta}} - \boldsymbol{\beta}^{\star}) \\ 
& = ||\mathbf{G}_{\lambda}^{1/2}(\widehat{\boldsymbol{\beta}}-\boldsymbol{\beta}^{\star})||_{E}^2 
\\ & = O_P(C_n).
\end{align*}
From the triangle inequality and the spline approximation property given in Lemma \ref{Lem:1} we consequently obtain 
\begin{align*}
||\widehat{f}-f_0||_n & \leq  ||\widehat{f}-f^{\star}||_n + ||f^{\star}-f_0||_n \\
 & = O_P(C_n^{1/2}) + O(K_n^{-j})
\\ & = O_P(C_n^{1/2}),
\end{align*}
which is the almost the result of Theorem~\ref{Thm:1}. To pass from the empirical norm $||\cdot||_n$ to the $\mathcal{L}_2$-norm $||\cdot||_2$ one can argue as in Corollary 1 of \citet{Kalogridis:2021}.

To establish the theorem it thus suffices to prove \eqref{eq:8}. To that end, use A.3 to decompose $L_n(C_n^{1/2}\boldsymbol{\gamma}) - L_n(\boldsymbol{0})$ as follows
\begin{align*}
L_n(C_n^{1/2}\boldsymbol{\gamma}) - L_n(\mathbf{0}) & =  \frac{1}{n} \sum_{i=1}^n \rho(\epsilon_i  + R_i + C_n^{1/2}\mathbf{B}^{\top}(x_i) \mathbf{G}_{\lambda}^{-1/2} \boldsymbol{\gamma}) 
\\ &\phantom{{}=1} \quad - \frac{1}{n} \sum_{i=1} ^n \rho(\epsilon_i + R_i)  + \lambda C_n \boldsymbol{\gamma}^{\top} \mathbf{G}_{\lambda}^{-1/2} \mathbf{P}_{q}^{\top}  \mathbf{P}_{q} \mathbf{G}_{\lambda}^{-1/2} \boldsymbol{\gamma} 
\\ &\phantom{{}=1} \quad -2 \lambda C_n^{1/2} \boldsymbol{\gamma}^{\top} \mathbf{G}_{\lambda}^{-1/2} \mathbf{P}_{q}^{\top}  \mathbf{P}_{q} \boldsymbol{\beta}^{\star}
\\ & = \frac{1}{n} \sum_{i=1}^n \int_{R_i}^{R_i + C_n^{1/2}\mathbf{B}^{\top}(x_i) \mathbf{G}_{ \lambda}^{-1/2} \boldsymbol{\gamma}} \{ \psi(\epsilon_i+u) - \psi(\epsilon_i) \} du \\   &\phantom{{}=1} \quad + \frac{C_n^{1/2}}{n} \sum_{i=1}^n \mathbf{B}^{\top}(x_i) \mathbf{G}_{ \lambda}^{-1/2} \boldsymbol{\gamma} \psi(\epsilon_i) 
\\  &\phantom{{}=1} \quad+ \lambda C_n \boldsymbol{\gamma}^{\top} \mathbf{G}_{\lambda}^{-1/2} \mathbf{P}_{q}^{\top}  \mathbf{P}_{q} \mathbf{G}_{\lambda}^{-1/2} \boldsymbol{\gamma}
\\ &\phantom{{}=1} \quad -2 \lambda C_n^{1/2} \boldsymbol{\gamma}^{\top} \mathbf{G}_{\lambda}^{-1/2} \mathbf{P}_{q}^{\top}  \mathbf{P}_{q} \boldsymbol{\beta}^{\star}
\\ & := I_1(\boldsymbol{\gamma})  + I_2(\boldsymbol{\gamma})  + I_3(\boldsymbol{\gamma}) ,
\end{align*}
with
\begin{align*}
I_1(\boldsymbol{\gamma}) & := \frac{1}{n} \sum_{i=1}^n \int_{R_i}^{R_i + C_n^{1/2} \mathbf{B}^{\top}(x_i) \mathbf{G}_{\lambda}^{-1/2} \boldsymbol{\gamma}} \{ \psi(\epsilon_i+u) - \psi(\epsilon_i) \} du \\ & \phantom{{}=1}+ \lambda C_n \boldsymbol{\gamma}^{\top} \mathbf{G}_{ \lambda}^{-1/2} \mathbf{P}_{q}^{\top} \mathbf{P}_q \ \mathbf{G}_{ \lambda}^{-1/2} \boldsymbol{\gamma},
\\ I_2(\boldsymbol{\gamma})  & := \frac{C_n^{1/2}}{n} \sum_{i=1}^n \mathbf{B}^{\top}(x_i) \mathbf{G}_{\lambda}^{-1/2}  \boldsymbol{\gamma} \psi(\epsilon_i)
\end{align*}
and
\begin{align*}
I_3(\boldsymbol{\gamma}) & := -2 C_n^{1/2} \boldsymbol{\gamma}^{\top} \mathbf{G}_{\lambda}^{-1/2} \mathbf{P}_{q}^{\top} \mathbf{P}_q \boldsymbol{\beta}^{\star}.
\end{align*}
By the superadditivity of the infimum we have the lower bound
\begin{align*}
 \inf_{||\boldsymbol{\gamma}||_{E} = D} \left[ L_n(C_n^{1/2}\boldsymbol{\gamma}) -  L(\boldsymbol{0}) \right] & \geq  \inf_{||\boldsymbol{\gamma}||_{E} = D} I_1(\boldsymbol{\gamma}) + \inf_{||\boldsymbol{\gamma}||_{E} = D} I_2(\boldsymbol{\gamma})  + \inf_{||\boldsymbol{\gamma}||_{E} = D} I_3(\boldsymbol{\gamma}).
\end{align*}
We determine the order of each term appearing on the right-hand side of the above inequality. Starting with $I_3(\boldsymbol{\gamma})$, observe that for every bounded function $f$ on a set $B$ we have $|\inf_{x \in B} f(x)| \leq \sup_{x \in B} |f(x)|$. Using this, the Schwarz inequality immediately gives
\begin{align*}
\sup_{||\boldsymbol{\gamma}||_{E} \leq D} | I_3(\boldsymbol{\gamma})  | & =  2 \lambda C_n^{1/2} \sup_{||\boldsymbol{\gamma}||_{E} \leq  D} | \boldsymbol{\gamma}^{\top} \mathbf{G}_{ \lambda}^{-1/2} \mathbf{P}_{q}^{\top} \mathbf{P}_q \boldsymbol{\beta}^{\star} | 
\\ & \leq 2 D C_n^{1/2} \lambda || \mathbf{G}_{\lambda}^{-1/2} ( \mathbf{P}_{q}^{\top} \mathbf{P}_q)^{1/2} || \times  || (\mathbf{P}_{q}^{\top} \mathbf{P}_q)^{1/2} \boldsymbol{\beta}^{\star} ||_{E}.
\end{align*}
By Proposition \ref{prop:1},
\begin{align*}
|| (\mathbf{P}_q^{\top} \mathbf{P}_q)^{1/2} \boldsymbol{\beta}^{\star} ||_{E}^2 = \sum_{j=q+1}^{K+p}(\Delta^q \beta_j^{\star} )^2 \leq c_0 K^{1-2q} \int_{0}^1 |f^{\star (q) }(x)|^{2} dx = O(K^{1-2q}),
\end{align*}
as $\int_{0}^1 |f^{\star (q) }(x)|^{2} dx$ is finite for every $f \in \mathcal{C}^{j}([0,1])$ with $j \geq q$, see Theorem (26) of \citet[Chapter XII]{DB:2001}. Next, by the submultiplicativity of the spectral norm, Lemma~\ref{Lem:2} and proposition 4.1 of \citet{Xiao:2019} we obtain
\begin{align*}
|| \mathbf{G}_{\lambda}^{-1/2} (\mathbf{P}_{q}^{\top} \mathbf{P}_q )^{1/2} || \leq || \mathbf{G}_{\lambda}^{-1/2}|| \times  || (\mathbf{P}_{q}^{\top} \mathbf{P}_q )^{1/2}|| = O(K^{1/2}). 
\end{align*}
Moreover, by definition of $\mathbf{G}_{\lambda}$,
\begin{align*}
|| \lambda^{1/2} (\mathbf{P}_{q}^{\top} \mathbf{P}_q )^{1/2} \mathbf{G}_{\lambda}^{-1/2} || \leq 1.
\end{align*}
By combining these bounds we find that
\begin{align}
\label{eq:9}
\sup_{||\boldsymbol{\gamma}||_{E} \leq D} \left| I_3(\boldsymbol{\gamma}) \right| & \leq 2 D C_n^{1/2} \lambda K^{1/2-q}  || \mathbf{G}_{\lambda}^{-1/2} (\mathbf{P}_{q}^{\top} \mathbf{P}_q )^{1/2}|| \nonumber
\\ &  \leq 2 D C_n^{1/2} \lambda^{1/2} K^{1/2-q}  \min\{\lambda^{1/2} K^{1/2}, 1 \} \nonumber
\\ & =  D C_n^{1/2} O\left( \min\{ \lambda K^{1-q}, \lambda^{1/2} K^{1/2-q} \} \right) \nonumber
\\ & = D O(C_n),
\end{align}
where the last line follows from the definition of $C_n$.

Turning to $I_2(\boldsymbol{\gamma})$, an application of the Schwarz inequality yields
\begin{align*}
\sup_{||\boldsymbol{\gamma}||_{E} \leq D} |I_2(\boldsymbol{\gamma})| \leq D C_n^{1/2}  \left\| \frac{1}{n} \sum_{i=1}^n   \mathbf{B}^{\top}(x_i) \mathbf{G}_{\lambda}^{-1/2} \psi(\epsilon_i) \right\|_{E}.
\end{align*}
To bound this term, note that the errors are i.i.d. and thus, by assumption A.6,
\begin{align*}
\mathbb{E} \left\{ \left\| \frac{1}{n} \sum_{i=1}^n   \mathbf{B}^{\top}(x_i) \mathbf{G}_{\lambda}^{-1/2} \psi(\epsilon_i)  \right\|_{E}^2 \right\} & =  \frac{\tau^2}{n} \Tr \{ \mathbf{H}^{1/2} \mathbf{G}_{ \lambda}^{-1} \mathbf{H}^{1/2} \}
\\ & =  \frac{\tau^2}{n} \Tr \{ (\mathbf{I}_{K+p} + \lambda \widetilde{\mathbf{D}}_q)^{-1} \}
\\ & =  \frac{\tau^2}{n} \sum_{j=1}^{K+p} \frac{1}{1+ \lambda \widetilde{s}_j	} ,
\end{align*}
where
\begin{equation*}
\widetilde{\mathbf{D}}_q := \mathbf{H}^{-1/2} \mathbf{P}_q^{\top} \mathbf{P}_q \mathbf{H}^{-1/2},
\end{equation*}
and $\widetilde{s}_j, \ j=1, \ldots, K+p$ are the eigenvalues of $\widetilde{\mathbf{D}}_q$. Under A.1, Lemma 5.1 of \citet{Xiao:2019} implies the existence of positive constants $c_1<c_2$ such that 
\begin{align*}
\widetilde{s}_1 = \ldots= \widetilde{s}_q = 0, \  c_1  (j-q)^{2q} \leq  K^{2q-1}\widetilde{s_j} \leq c_2 j^{2q}, \quad (j=q+1, \ldots, K+p),
\end{align*}
for all large $n$. By integral approximation,
\begin{align*}
\sum_{j=1}^{K+p} \frac{1}{1+ \lambda \widetilde{s}_j} & \leq q + \int_{q}^{K+p} \frac{dx}{1+\lambda K^{1-2q} c_1 (x-q)^{2q}}  \nonumber
\\ & = q +  \int_{0}^{K+p-q} \frac{dx}{1+ \lambda  K^{1-2q} c_1 x^{2q}} \nonumber
\\ & \leq  q + c_1^{-1/2q} \lambda^{-1/2q} K^{1-1/2q} \int_{0}^{K_{q}} \frac{dx}{1+x^{2q}},
\end{align*}
where $K_{q} = c_1^{1/2q} (K+p-q) \lambda^{1/2q} K^{1/2q-1} \asymp K^{1/2q} \lambda^{1/2q}$. The integrand $(1+x^{2q})^{-1}$ is bounded by $1$ for all $x \in \mathbb{R}_{+}$. Consequently,
\begin{align*}
\lambda^{-1/2q} K^{1-1/2q} \int_{0}^{K_{q}} \frac{dx}{1+x^{2q}} \leq \lambda^{-1/2q} K^{1-1/2q} K_{q} = O(K).
\end{align*}
At the same time for all $q \geq 1$ we have
\begin{align*}
\int_{0}^{K_{q}} \frac{dx}{1+x^{2q}}  \leq  \int_{0}^{\infty} \frac{dx}{1+x^{2q}} < \infty,
\end{align*}
which yields
\begin{align*}
\lambda^{-1/2q} K^{1-1/2q} \int_{0}^{K_{q}} \frac{dx}{1+x^{2q}} = O_P(\lambda^{-1/2q} K^{1-1/2q}).
\end{align*}
Combining these two bounds we obtain
\begin{align*}
\mathbb{E} \left\{ \left\| \frac{1}{n} \sum_{i=1}^n   \mathbf{B}^{\top}(x_i) \mathbf{G}_{\lambda}^{-1/2} \psi(\epsilon_i)  \right\|_{E}^2 \right\} = n^{-1} O \left( \min\{K, \lambda_K^{-1/2q} \} \right),
\end{align*}
and consequently, by Markov's inequality,
\begin{align}
\label{eq:10}
\sup_{||\boldsymbol{\gamma}||_{E} \leq D} |I_2(\boldsymbol{\gamma})| =  D \frac{C_n^{1/2}}{ n^{1/2}} O_P \left( \min\{K^{1/2}, \lambda_K^{-1/4q} \} \right)= D C_n O_P(1),
\end{align} 
again by definition of $C_n$.

To determine the order of $I_1(\boldsymbol{\gamma})$ we further decompose this term into
\begin{align}
\label{eq:11}
I_1(\boldsymbol{\gamma}) = \mathbb{E}\{I_1(\boldsymbol{\gamma})\} + [I_1(\boldsymbol{\gamma})-\mathbb{E}\{I_1(\boldsymbol{\gamma})\} ].
\end{align}
We first determine a lower bound for $\inf_{||\boldsymbol{\gamma}||_{E} = D}\mathbb{E}\{I_1(\boldsymbol{\gamma})\}$. As a first step, observe that by Lemma~\ref{Lem:1} and Lemma~\ref{Lem:2},
\begin{align*}
\max_{1\leq i \leq n} \{ |R_i| + C_n^{1/2} |\mathbf{B}^{\top}(x_i) \mathbf{G}_{n \lambda}^{-1/2} \boldsymbol{\gamma}| \} & \leq O(K^{-j}) + D C_n^{1/2} \max_{1\leq i \leq n} ||\mathbf{B}^{\top}(x_i) \mathbf{G}_{\lambda}^{-1/2} ||
\\ & = O(K^{-j}) + D C_n^{1/2} O(K^{1/2})
\\ & = o(1),
\end{align*}
as $n \to \infty$, by our limit assumptions. Consequently, A.6 allows us to write
\begin{align*}
\mathbb{E}\{I_1(\boldsymbol{\gamma}) \} & = \frac{1}{n} \sum_{i=1}^n \int_{R_i}^{R_i + C_n^{1/2} \mathbf{B}^{\top}(x_i) \mathbf{G}_{\lambda}^{-1/2} \boldsymbol{\gamma}} \mathbb{E}\{ \psi(\epsilon_i+u) \} du
+ C_n\lambda \boldsymbol{\gamma}^{\top} \mathbf{G}_{\lambda}^{-1/2} \mathbf{P}_{q}^{\top}\mathbf{P}_{q} \mathbf{G}_{n \lambda}^{-1/2} \boldsymbol{\gamma}
 \\ & = \frac{1}{n} \sum_{i=1}^n \int_{R_i}^{R_i + C_n^{1/2} \mathbf{B}^{\top}(x_i) \mathbf{G}_{ \lambda}^{-1/2} \boldsymbol{\gamma}} \{\xi u + o(u)\} du
+ C_n\lambda \boldsymbol{\gamma}^{\top} \mathbf{G}_{\lambda}^{-1/2} \mathbf{P}_{q}^{\top}\mathbf{P}_{q}  \mathbf{G}_{\lambda}^{-1/2} \boldsymbol{\gamma}
\\ & =   \frac{1}{2n} \sum_{i = 1}^n \{ C_n \xi |\mathbf{B}^{\top}(x_i) \mathbf{G}_{\lambda}^{-1/2} \boldsymbol{\gamma} |^2 + 2 R_i C_n^{1/2} \mathbf{B}^{\top}(x_i) \mathbf{G}_{\lambda}^{-1/2} \boldsymbol{\gamma}  \} \{ 1+o(1)\}
\\ &\phantom{{}=1}  + C_n\lambda \boldsymbol{\gamma}^{\top} \mathbf{G}_{\lambda}^{-1/2} \mathbf{P}_{q}^{\top}\mathbf{P}_{q}  \mathbf{G}_{\lambda}^{-1/2} \boldsymbol{\gamma}
\\ & := I_{11}(\boldsymbol{\gamma}) + I_{12} (\boldsymbol{\gamma}),
\end{align*}
with
\begin{align*}
I_{11}(\boldsymbol{\gamma}) & := \frac{\xi}{2n} \sum_{i=1}^n C_n |\mathbf{B}^{\top}(x_i)\mathbf{G}_{\lambda}^{-1/2} \boldsymbol{\gamma}|^2 \{1+o(1)\} 
+  C_n\lambda \boldsymbol{\gamma}^{\top} \mathbf{G}_{\lambda}^{-1/2} \mathbf{P}_{q}^{\top}\mathbf{P}_{q}  \mathbf{G}_{\lambda}^{-1/2} \boldsymbol{\gamma}
\end{align*}
and 
\begin{align*}
I_{12} (\boldsymbol{\gamma}) & := \frac{C_n^{1/2}}{n} \sum_{i=1}^n R_i  \mathbf{B}^{\top}(x_i) \mathbf{G}_{\lambda}^{-1/2} \boldsymbol{\gamma}\{1+o(1) \}.
\end{align*}

Focusing first on $I_{11}(\boldsymbol{\gamma})$, for all $\boldsymbol{\gamma} \in \mathbb{R}^{K+p}$ we have
\begin{align}
\label{eq:12}
I_{11}(\boldsymbol{\gamma}) & \geq \min\left\{ \frac{\xi}{2}, 1 \right\} C_n \boldsymbol{\gamma}^{\top} \left[  \mathbf{G}_{\lambda}^{-1/2} \mathbf{H} \mathbf{G}_{\lambda}^{-1/2} \{1+o(1)\} + \lambda  \mathbf{G}_{\lambda}^{-1/2} \mathbf{P}_q^{\top} \mathbf{P}_q \mathbf{G}_{\lambda}^{-1/2} \right]  \boldsymbol{\gamma} \nonumber
\\ & = c_0 ||\boldsymbol{\gamma}||_{E}^2 C_n\{ 1+o(1)\},
\end{align}
where we have used the definition of $\mathbf{G}_{\lambda}$, i.e.,  $\mathbf{G}_{\lambda}= \mathbf{H} + \lambda \mathbf{P}_q^{\top} \mathbf{P}_q$. 

We next determine an upper bound for $|I_{12}(\boldsymbol{\gamma})|$. By the triangle and Schwarz inequalities, we obtain
\begin{align}
\label{eq:13}
|I_{12}(\boldsymbol{\gamma})| & \leq c_0 C_n^{1/2} \max_{1 \leq i \leq n} |R_i| \frac{1}{n} \sum_{i=1}^n |\mathbf{B}^{\top}(x_i) \mathbf{G}_{\lambda}^{-1/2} \boldsymbol{\gamma}| \nonumber
\\ &  \leq c_0 C_n \left\{  \frac{1}{n} \sum_{i=1}^n |\mathbf{B}^{\top}(x_i) \mathbf{G}_{ \lambda}^{-1/2} \boldsymbol{\gamma}|^2 \right\}^{1/2} \nonumber
\\ & \leq c_0 C_n \left\{ \boldsymbol{\gamma}^{\top} \mathbf{G}_{\lambda}^{-1/2} \mathbf{H} \mathbf{G}_{\lambda}^{-1/2} \boldsymbol{\gamma} \right\}^{1/2} \nonumber
\\ & \leq c_0 C_n ||\boldsymbol{\gamma}||_{E} ,
\end{align}
since $\max_{1\leq i \leq n}|R_i| = O(K^{-j}) = O(C_n^{1/2})$ and $||\mathbf{G}_{\lambda}^{1/2} \mathbf{H} \mathbf{G}_{\lambda}^{1/2}|| \leq 1$, as $\mathbf{P}_{q}^{\top} \mathbf{P}_q$ is positive-semidefinite. Combining \eqref{eq:12} and \eqref{eq:13} we deduce that there exists a strictly positive $c_0$ such that
\begin{align}
\label{eq:14}
\inf_{||\boldsymbol{\gamma}||_{E} = D} \mathbb{E}\{ I_{1}(\boldsymbol{\gamma}) \} & \geq c_0 D^2 C_n\{1+o(1)+o(D^{-1})\},
\end{align}
for all large $n$.

To complete the proof we now use an empirical process argument to show that for every $\delta>0$,
\begin{equation}
\label{eq:15}
\lim_{n \to \infty}\Pr\left( \sup_{||\boldsymbol{\gamma}||_{E} \leq D} \left|I_1(\boldsymbol{\gamma}) - \mathbb{E}\{I_1(\boldsymbol{\gamma}) \} \right| > C_n \delta \right) = 0,
\end{equation}
which then implies 
\begin{align*}
\sup_{||\boldsymbol{\gamma}||_{E} \leq D} I_1(\boldsymbol{\gamma}) - \mathbb{E}\{I_1(\boldsymbol{\gamma}) \} = O_P(1) C_n,
\end{align*}
This in turn, in combination with  \eqref{eq:9}, \eqref{eq:10} and \eqref{eq:11} and \eqref{eq:14} implies that $\inf_{||\boldsymbol{\gamma}|| = D}\mathbb{E}\{I_{1}(\boldsymbol{\gamma})\}$ will be positive and dominate all other terms for sufficiently large $D$. Therefore, \eqref{eq:8} holds. 

To prove \eqref{eq:15}, let us view each pair $(x_i, \epsilon_i), i = 1, \ldots, n$, as a random tuple with distribution $P_i = \delta_{x_i} \times F$, with $\delta_{x_i}$ the Dirac-measure at $x_i$ and  $F$ the distribution of the error. Then $\bar{P} := n^{-1} \sum_{i=1}^n P_i = Q_n \times F$, where in accordance with A.1, $Q_n$ is the empirical distribution of the design points.  Further let $P_n$ denote the empirical measure placing mass $n^{-1}$ on each $(x_i, \epsilon_i)$, i.e., $P_n = n^{-1} \sum_{i=1}^n \delta_{x_i, \epsilon_i}$. Then, adopting the notation of \citet[Chapters 5 and 8]{van de Geer:2000}, we have
\begin{align*}
I_{1}(\boldsymbol{\gamma}) - \mathbb{E}\{I_1(\boldsymbol{\gamma}) \} = \int f_{\boldsymbol{\gamma}} d(P_n-\bar{P}) = n^{-1/2} v_n ( f_{\boldsymbol{\gamma}} ),
\end{align*}
where $v_n(\cdot)$ is the empirical process and $f_{\boldsymbol{\gamma}}$ is the function $[0,1] \times \mathbb{R} \to \mathbb{R}$ defined as 
\begin{align*}
f_{\boldsymbol{\gamma}}(x,y) = \int_{R(x)}^{R(x) + C_n^{1/2} \mathbf{B}^{\top}(x) \mathbf{G}_{\lambda}^{-1/2} \boldsymbol{\gamma} } \{\psi(y+u) - \psi(y) \} du,
\end{align*}
where $R(x) = f_0(x) - f^{\star}(x)$ and $\boldsymbol{\gamma}$ satisfies $||\boldsymbol{\gamma}||_{E} \leq D$. This class of functions depends on $n$ but we suppress this dependence for notational convenience. 
We will apply Theorem 5.11 of \citet{van de Geer:2000} to this empirical process adapted for independent but not identically distributed random variables $(x_i, \epsilon_i)$ \citep[see the remarks in][pp. 131--132]{van de Geer:2000}. To this end we first derive a uniform bound on $f_{\boldsymbol{\gamma}}$ and a bound on its $\mathcal{L}^2 (\bar{P})$-norm. For the former note that by assumption A.4 and a previous argument 
\begin{align}
\label{eq:16}
\sup_{|| \boldsymbol{\gamma}||_{E} \leq D} \sup_{(x,y) \in [0,1] \times \mathbb{R}} |f_{\boldsymbol{\gamma}}(x,y)| \leq c_0 K^{1/2} C_n^{1/2}.
\end{align}
Similarly, appealing to assumption A.5, we find
\begin{align}
\label{eq:17}
\int |f_{\boldsymbol{\gamma}}|^2 d \bar{P} & = n^{-1} \sum_{i=1}^n \mathbb{E} \left\{ \left|  \int_{R_i} ^{R_i + C_n^{1/2} \mathbf{B}^{\top}(x_i) \mathbf{G}_{\lambda}^{-1/2} \boldsymbol{\gamma}} \{ \psi(\epsilon_i+u) - \psi(\epsilon_i) \} du \right|^2  \right\} \nonumber
\\ & \leq c_0 n^{-1} C_n^{1/2}  \sum_{i=1}^n |  \mathbf{B}^{\top}(x_i)\mathbf{G}_{\lambda}^{-1/2} \boldsymbol{\gamma}| \left|\int_{R_i}^{R_i + C_n^{1/2}\mathbf{B}^{\top}(x_i)\mathbf{G}_{ \lambda}^{-1/2} \boldsymbol{\gamma} } |u|du \right| \nonumber
\\ & \leq c_0 n^{-1} C_n^{1/2}  \sum_{i=1}^n |  \mathbf{B}^{\top}(x_i)\mathbf{G}_{\lambda}^{-1/2} \boldsymbol{\gamma}| \left(|R_i|^2 + C_n | \mathbf{B}^{\top}(x_i)\mathbf{G}_{\lambda}^{-1/2} \boldsymbol{\gamma}|^2 \right) \nonumber
\\ & \leq  c_0 n^{-1} C_n^{1/2} \max_{1\leq i \leq n} |  \mathbf{B}^{\top}(x_i)\mathbf{G}_{\lambda}^{-1/2} \boldsymbol{\gamma}| \nonumber
  \sum_{i=1}^n \left( |R_i|^2 + C_n | \mathbf{B}^{\top}(x_i)\mathbf{G}_{\lambda}^{-1/2} \boldsymbol{\gamma}|^2 \right)  \nonumber
\\ & \leq c_0 K^{1/2} C_n^{3/2}.
\end{align}
Let $\rho_{S}(f_{\boldsymbol{\gamma}})$ denote the Bernstein "norm" given by
\begin{align*}
|\rho_S (f_{\boldsymbol{\gamma}})|^2 = 2S^2 \int \left( e^{ |f_{\boldsymbol{\gamma}}|/S } - 1 - |f_{\boldsymbol{\gamma}}|/S \right)  d \bar{P}, \quad S> 0.
\end{align*}
The bounds in \eqref{eq:16} and \eqref{eq:17} reveal that we may take $S = c_0 K^{1/2} C_n^{1/2}$ and $R= c_0 K^{1/4} C_n^{3/4}$ for appropriately chosen $c_0$ in Theorem 5.11 of \citet{van de Geer:2000}. These choices fulfil the conditions since $R^2/S = c_0 C_n$ so that we may take $C_1 = c_0$ in that theorem.  With these choices, from Lemma 5.8 of \citet{van de Geer:2000} we deduce that $|\rho_S (f_{\boldsymbol{\gamma}})|^2 \leq c_0 R^2$. Furthermore, using $\mathcal{H}_{B,S}(\delta, \{f_{\boldsymbol{\gamma}}, ||\boldsymbol{\gamma}||_{E} \leq D\}, \bar{P})$ to denote the $\delta$-generalized entropy with bracketing in the Bernstein norm $\rho_{S}$, modifying the constants when necessary, Lemma 5.10 of \citet{van de Geer:2000} reveals that 
\begin{align*}
\mathcal{H}_{B,S}(\delta, \{f_{\boldsymbol{\gamma}}, ||\boldsymbol{\gamma}||_{E} \leq D\}, \bar{P}) \leq \mathcal{H}_B(c_0 \delta,\{f_{\boldsymbol{\gamma}}, ||\boldsymbol{\gamma}||_{E} \leq D\}, \bar{P}),
\end{align*}
where $\mathcal{H}_B(\delta,\{f_{\boldsymbol{\gamma}}, ||\boldsymbol{\gamma}||_{E} \leq D\}, \bar{P})$ stands for the usual $L_2(\bar{P})$ $\delta$-entropy with bracketing. 
We next derive a bound for $\mathcal{H}_B(\delta,\{f_{\boldsymbol{\gamma}}, ||\boldsymbol{\gamma}||_{E} \leq D\}, \bar{P})$. Observe that for any $(\boldsymbol{\gamma}_1, \boldsymbol{\gamma}_2) \in \mathbb{R}^{K+p} \times \mathbb{R}^{K+p} $ in the $D$-ball we have
\begin{align*}
|f_{\boldsymbol{\gamma}_1}(x,y)-f_{\boldsymbol{\gamma}_2}(x,y)| & = \left| \int_{R(x) + C_n^{1/2} \mathbf{B}^{\top}(x) \mathbf{G}_{\lambda}^{-1/2} \boldsymbol{\gamma}_2}^{R(x) + C_n^{1/2} \mathbf{B}^{\top}(x) \mathbf{G}_{\lambda}^{-1/2} \boldsymbol{\gamma}_1} \{ \psi(y+u) - \psi(y) \} du \right|
 \\ & \leq c_0 K^{1/2} C_n^{1/2} || \boldsymbol{\gamma}_1 - \boldsymbol{\gamma}_2||_{E}.
\end{align*}
It now follows from Theorem 2.7.11 of \citet{VDV:1996} that 
\begin{align*}
\mathcal{H}_B(\delta,\{f_{\boldsymbol{\gamma}}, ||\boldsymbol{\gamma}||_{E} \leq D\}, \bar{P}) \leq \mathcal{H}(c_0 \delta/(K^{1/2} C_n^{1/2}), ||\boldsymbol{\gamma}||_{E} \leq D),
\end{align*}
where $\mathcal{H}(\delta, ||\boldsymbol{\gamma}||_{E} \leq D)$ refers to the $\delta$-entropy of the Euclidean-ball with radius $D$.
By Lemma 2.5 of \citet{van de Geer:2000} we have
\begin{align*}
\mathcal{H}(c_0 \delta/(K^{1/2} C_n^{1/2}), ||\boldsymbol{\gamma}||_{E} \leq D) \leq c_0 (K+p) \log\left(K^{1/2} C_n^{1/2}/\delta +1 \right).
\end{align*}
With this upper bound on $\mathcal{H}_B(\delta,\{f_{\boldsymbol{\gamma}}, ||\boldsymbol{\gamma}||_{E} \leq D\}, \bar{P})$ we may now bound the bracketing integral in Theorem 5.11 of \citet{van de Geer:2000} as follows. 
\begin{align*}
\int_{0}^{R} \mathcal{H}_{B,S}^{1/2}(u, \{f_{\boldsymbol{\gamma}}, ||\boldsymbol{\gamma}||_{E} \leq D\}, \bar{P})du  & = c_0 (K+p)^{1/2} \int_{0}^{R} \log^{1/2}\left( \frac{K^{1/2}C^{1/2}}{u} + 1 \right) du
\\ &  = c_0 (K+p)^{1/2} K^{1/2} C_n^{1/2} \int_{0}^{R/(K^{1/2}C_n^{1/2})} \log^{1/2} \left(\frac{1}{u}+1 \right) du
\\ & \leq c_0 K^{3/4} C_n^{3/4} \log^{1/2}(n),
\end{align*}
for all large $n$. Fix $\epsilon>0$ and take $a = \epsilon n^{1/2} C_n$ in that theorem, then it follows that
\begin{align*}
\int_{0}^{R} \mathcal{H}_{B,S}^{1/2}(u, \{f_{\boldsymbol{\gamma}}, ||\boldsymbol{\gamma}||_{E} \leq D\}, \bar{P})du/(n^{1/2}C_n) = c_0 \frac{K^{3/4} \log^{1/2}(n)}{C_n^{1/4} n^{1/2}},
\end{align*}
and, by assumption, there exists a $\delta>0$ such that $\lim n^{\delta-1} d_{K,\lambda}^{-1} K^3 = 0$. Hence, by definition of $C_n$, we have
\begin{align*}
\int_{0}^{R} \mathcal{H}_{B,S}^{1/2}(u, \{f_{\boldsymbol{\gamma}}, ||\boldsymbol{\gamma}||_{E} \leq D\}, \bar{P})du/(n^{1/2}C_n) = o(1),
\end{align*}
as $n \to \infty$. Thus, Theorem 5.11 of \citet{van de Geer:2000} may be applied to give
\begin{align*}
\Pr\left( \sup_{||\boldsymbol{\gamma}||_{E} \leq D} |I_1(\boldsymbol{\gamma}) - \mathbb{E}\{I_1(\boldsymbol{\gamma}) \}| \geq \delta C_n  \right) & = \Pr\left( \sup_{||\boldsymbol{\gamma}||_{E} \leq D} |v_n(f_{\boldsymbol{\gamma}})| \geq \delta n^{1/2} C_n \right)
\\ & \leq  c_0 \exp \left[ - c_0 \delta^2 \frac{n C_n^{1/2}}{K^{1/2}} \right],
\end{align*}
for every $\delta>0$ and for all large $n$. The exponential tends to zero by our limit assumptions and thus we have established \eqref{eq:15}. From \eqref{eq:10}, \eqref{eq:14} and \eqref{eq:15}, it now follows that for sufficiently large $D$, $\mathbb{E}\{I_1(\boldsymbol{\gamma}) \}$ is positive and dominates all other terms with  with arbitrarily high probability, which completes the proof. 

The last claim of the theorem may be established by using the Lipschitz condition of $\psi$ to obtain tighter bounds on \eqref{eq:16} and \eqref{eq:17} of  order $K C_n$ and $ K C_n^2$, respectively.

\end{proof}

\begin{proof}[Proof of Theorem~\ref{Thm:2}]

Let $L_{n, \widehat{\sigma}}(\boldsymbol{\beta})$ denote the objective function for the present problem, that is,
\begin{equation*}
L_{n, \widehat{\sigma}}(\boldsymbol{\beta}) = \frac{1}{n} \sum_{i=1}^n \rho\left(\frac{\epsilon_i + R_i + \mathbf{B}^{\top}(x_i)(\boldsymbol{\beta}^{\star} - \boldsymbol{\beta})}{\widehat{\sigma}} \right) + \lambda \boldsymbol{\beta}^{\top} \mathbf{P}_{q}^{\top} \mathbf{P}_q \boldsymbol{\beta}.
\end{equation*}
Using the argument in the proof of Theorem~\ref{Thm:1} and the consistency of $\widehat{\sigma}$, it suffices to show that for some $\delta>0$ and every $\epsilon>0$ there exists a  $D=D_{\epsilon} \geq 1$ such that 
\begin{equation}
\label{eq:18}
\lim_{n \to \infty} \Pr\left(\inf_{||\boldsymbol{\gamma}||_{E} = D} L_{n, \widehat{\sigma}}(C_n^{1/2} \boldsymbol{\gamma}) > L_{n, \widehat{\sigma}}(\mathbf{0}), \ |\widehat{\sigma}-\sigma| \leq \delta \right) \geq 1-\epsilon,
\end{equation}
for $C_n = n^{-1} \min\{K, \lambda_K^{-1/{2q}} \} + \min\{\lambda_K^2 K^{2q}, \lambda_K\} + K^{-2j}$. Write $L_{n, \widehat{\sigma}}(C_n^{1/2} \boldsymbol{\gamma}) - L_{n, \widehat{\sigma}}(\mathbf{0}) = I_1(\boldsymbol{\gamma}, \widehat{\sigma}) + I_2(\boldsymbol{\gamma}, \widehat{\sigma}) + I_3(\boldsymbol{\gamma}, \widehat{\sigma})$ with
\begin{align*}
I_1(\boldsymbol{\gamma}, \widehat{\sigma}) & = \frac{1}{n\widehat{\sigma}} \sum_{i=1}^n
 \int_{R_i}^{R_i + C_n^{1/2}\mathbf{B}^{\top}(x_i)\mathbf{G}_{\lambda}^{-1/2} \boldsymbol{\gamma} } \left\{ \psi\left(\frac{\epsilon_i + u}{\widehat{\sigma}}\right) - \psi\left(\frac{\epsilon_i}{\widehat{\sigma}}\right) \right\}du \\ &\phantom{{}=1} +  C_n\lambda \boldsymbol{\gamma}^{\top} \mathbf{G}_{\lambda}^{-1/2} \mathbf{P}_{q}^{\top} \mathbf{P}_q \mathbf{G}_{\lambda}^{-1/2} \boldsymbol{\gamma},\,
  \\ I_2(\boldsymbol{\gamma}, \widehat{\sigma}) & =  -\frac{C_n^{1/2}}{n\widehat{\sigma}} \sum_{i=1}^n \mathbf{B}^{\top}(x_i) \mathbf{G}_{\lambda}^{-1/2} \boldsymbol{\gamma} \psi\left(\frac{\epsilon_i}{\widehat{\sigma}}\right)
\end{align*}
and
\begin{align*}
 I_3(\boldsymbol{\gamma}, \widehat{\sigma}) & = 2 C_n^{1/2}\lambda \boldsymbol{\gamma}^{\top} \mathbf{G}_{\lambda}^{-1/2} \mathbf{P}_{q}^{\top} \mathbf{P}_q \boldsymbol{\beta}^{\star}.
\end{align*} 
It will be shown that for some positive constant $c_0$,
\begin{align}
\label{eq:19}
\inf_{||\boldsymbol{\gamma}||_{E} = D, |\alpha-\sigma| \leq \delta  } \mathbb{E}\{I_1(\boldsymbol{\gamma}, \alpha)\} &\geq c_0 D^2\{1+o(1)+o(D^{-1})\},
\\
\sup_{||\boldsymbol{\gamma}||_{E} \leq D, |\alpha-\sigma| \leq \delta} |I_1(\boldsymbol{\gamma}, \alpha) - \mathbb{E}\{I_1(\boldsymbol{\gamma}, \alpha)\}| & =  o_P(1) C_n, \label{eq:20}
\\ \sup_{||\boldsymbol{\gamma}||_{E} \leq  D} |I_2(\boldsymbol{\gamma}, \widehat{\sigma})| & = O_P(1) D C_n \label{eq:21}
\end{align}
and
\begin{align}
 \sup_{||\boldsymbol{\gamma}||_{E} \leq D, |\alpha-\sigma| \leq \delta } |I_3(\boldsymbol{\gamma}, \alpha)|  = O(1) D C_n \label{eq:22}.
\end{align}
These are sufficient for \eqref{eq:18} to hold for a suitably large $D$. 

Since $I_3$ does not depend on $\widehat{\sigma}$, the argument of Theorem~\ref{Thm:1} immediately shows that its supremum is indeed  $O(1) D C_n$. To treat $I_2(\boldsymbol{\gamma}, \widehat{\sigma})$ we first use the triangle inequality to obtain the bound
\begin{align*}
(\sigma - \delta)|I_2(\boldsymbol{\gamma}, \widehat{\sigma})| & \leq \left| \frac{C_n^{1/2}}{n} \sum_{i=1}^n \mathbf{B}^{\top}(x_i) \mathbf{G}_{\lambda}^{-1/2} \boldsymbol{\gamma} \psi\left(\frac{\epsilon_i}{\sigma}\right) \right| \\ &\phantom{{}=1}  
+ \left| \frac{C_{n}^{1/2}}{n} \sum_{i=1}^n  \mathbf{B}^{\top}(x_i) \mathbf{G}_{\lambda}^{-1/2} \boldsymbol{\gamma} \left\{ \psi\left(\frac{\epsilon_i}{\widehat{\sigma}}\right) - \psi\left(\frac{\epsilon_i}{\sigma}\right) \right\} \right|.
\end{align*}
The supremum of the first term may be treated as in the proof of Theorem~\ref{Thm:1} to yield the order $O_P(1)D C_n$.  For the second term, choose $ \epsilon^{\prime} = (2 \sigma)^{-1}$ and note that for all large $n$, $\widehat{\sigma}^{-1} > \epsilon^{\prime}$ with high probability. Furthermore,  for any $\delta \in (0, \sigma/2)$ we have
\begin{align*}
\left|\frac{1}{\widehat{\sigma}} - \frac{1}{\sigma} \right| = \frac{|\sigma - \widehat{\sigma}|}{\widehat{\sigma} \sigma} \leq 	 \frac{|\sigma - \widehat{\sigma}|}{(\sigma - \delta) \sigma} \leq 2\frac{|\sigma - \widehat{\sigma}|}{\sigma^2},
\end{align*}
with probability tending to one. Therefore, condition B.3 now reveals the existence of $M_{\sigma}>0$ such that for all $\boldsymbol{\gamma}$ with $||\boldsymbol{\gamma}||_{E} \leq D$,
\begin{align*}
\left| \sum_{i=1}^n  \mathbf{B}^{\top}(x_i) \mathbf{G}_{\lambda}^{-1/2} \boldsymbol{\gamma} \left\{ \psi\left(\frac{\epsilon_i}{\widehat{\sigma}}\right) - \psi\left(\frac{\epsilon_i}{\sigma}\right) \right\} \right| &  \leq \frac{2M_{\sigma}}{\sigma^2} \sum_{i=1}^n | \mathbf{B}^{\top}(x_i) \mathbf{G}_{\lambda}^{-1/2} \boldsymbol{\gamma}| |\widehat{\sigma}-\sigma|
\\ & = O_P(1) D \left\{\sum_{i=1}^n || \mathbf{B}^{\top}(x_i) \mathbf{G}_{\lambda}^{-1/2}||_{E}^2  \right\}^{1/2}
\\ & =  O_P(n^{1/2}) D \min\{ K^{1/2}, \lambda_K^{-1/{4q}} \},
\end{align*}
by the Schwarz inequality and the root-n consistency of $\widehat{\sigma}$. Combining the above now yields $\sup_{||\boldsymbol{\gamma}||_{E} \leq  D} |I_2(\boldsymbol{\gamma}, \widehat{\sigma})|  = O_P(1) D C_n$, which is \eqref{eq:21}.

Turning to $\mathbb{E}\{I_{1}(\boldsymbol{\gamma}, \alpha)\}$, using B.5 a derivation as in the proof of Theorem~\ref{Thm:1} yields
\begin{align*}
\inf_{||\boldsymbol{\gamma}||_{E} = D, |\alpha - \sigma| \leq \delta} \mathbb{E} \{I_{1}(\boldsymbol{\gamma}, \alpha)\} \geq c_0 D^2 C_n\{1+o(1)+o(D^{-1}) \},
\end{align*}
where, by the local boundedness of $\xi(\alpha)$, ${\color{red} c}$ is strictly positive. We have thus established \eqref{eq:19}

To conclude the proof we need to show \eqref{eq:20}.  Adopting the notation of Theorem~\ref{Thm:1}
\begin{align*}
I_1(\boldsymbol{\gamma}, \alpha) - \mathbb{E}\{I_1(\boldsymbol{\gamma},\alpha \} = \int f_{\boldsymbol{\gamma}, \alpha} d \bar{P} = n^{-1/2} v_n (f_{\boldsymbol{\gamma}, \alpha}),
\end{align*}
where $v_n(\cdot)$ is the empirical process and the function $f_{\boldsymbol{\gamma}, \alpha}$ is given by
\begin{align*}
f_{\boldsymbol{\gamma}, \alpha}(x,y) = \frac{1}{\alpha} \int_{R(x)}^{R(x) + C_n^{1/2} \mathbf{B}^{\top}(x) \mathbf{G}_{\lambda}^{-1/2} \boldsymbol{\gamma}} \left\{ \psi\left( \frac{y+u}{\alpha} \right) - \psi \left( \frac{y}{\alpha} \right)   \right\} du,
\end{align*}
for $(x,y) \in [0,1] \times \mathbb{R}$, where $||\boldsymbol{\gamma}||_{E} \leq D$ and $\alpha \in [\sigma - \delta, \sigma + \delta]$.
 Let us write $\mathcal{B}_{D} = \{ \boldsymbol{\gamma} \in \mathbb{R}^{K+p}: ||\boldsymbol{\gamma}||_{E} \leq D \}$ and $V_{\sigma} = [\sigma-\delta, \sigma+\delta]$  for convenience. We again aim to apply Theorem 5.11 of \citep{van de Geer:2000} to this empirical process and thus proceed as in the proof of Theorem~\ref{Thm:1}. Since, by our limit assumptions,
\begin{align*}
\lim_{n \to \infty} \sup_{x \in [0,1]} |R(x)| = \lim_{n \to \infty} C_n^{1/2} \sup_{\boldsymbol{\gamma} \in \mathcal{B}_{D}} \sup_{x \in [0,1]} |\mathbf{B}^{\top}(x) \mathbf{G}_{\lambda}^{-1/2} \boldsymbol{\gamma}| = 0,
\end{align*}
the Lipschitz-continuity of $\psi$ (assumption B.3) yields
\begin{align*}
\sup_{\boldsymbol{\gamma} \in \mathcal{B}_{D}, \alpha \in V_{\sigma}} \sup_{x \in [0,1], y \in \mathbb{R}} |f_{\boldsymbol{\gamma}, \alpha}(x,y)| \leq c_0 K C_n.
\end{align*}
Similarly, using the Schwarz inequality,
\begin{align*}
\int |f_{\boldsymbol{\gamma}, \alpha}|^2 d \bar{P} &\leq \frac{1}{n (\sigma-\delta)^2} \sum_{i=1}^n \mathbb{E}\left\{ \left| \int_{R_i}^{R_i + C_n^{1/2} \mathbf{B}^{\top}(x_i) \mathbf{G}_{\lambda}^{-1/2} \boldsymbol{\gamma}} \left\{  \psi\left( \frac{\epsilon_i+u}{\alpha} \right) - \psi \left( \frac{\epsilon_i}{\alpha} \right) \right\} du \right|^2  \right\}
\\ & \leq c_0 n^{-1} \sum_{i=1}^n | C_n^{1/2} \mathbf{B}^{\top}(x_i) \mathbf{G}_{\lambda}^{-1/2} \boldsymbol{\gamma}| \left| \int_{R_i}^{R_i + C_n^{1/2} \mathbf{B}^{\top}(x_i) \mathbf{G}_{\lambda}^{-1/2} \boldsymbol{\gamma}} |u|^2 du \right|
\\ & \leq c_0 C_n^{2} \max_{1\leq i \leq n} | \mathbf{B}^{\top}(x_i) \mathbf{G}_{\lambda}^{-1/2} \boldsymbol{\gamma}|^2  n^{-1} \sum_{i=1}^n |\mathbf{B}^{\top}(x_i) \mathbf{G}_{\lambda}^{-1/2} \boldsymbol{\gamma}|^2
\\ & \leq c_0 K C_n^2.
\end{align*}
It follows that we may take $S = c_0 K C_n$ and $R = c_0 C_n K^{1/2}$ in Lemma 5.8 of \citet{van de Geer:2000}.

We now bound the generalized Bernstein entropy of the class of functions $f_{\boldsymbol{\gamma}, \alpha},   \boldsymbol{\gamma} \in \mathcal{B}_{D}, \alpha \in V_{\sigma}$. Since these functions are uniformly bounded, we find 
\begin{align*}
H_{B,S}(\delta, \{ f_{\boldsymbol{\gamma}, \alpha},   \boldsymbol{\gamma} \in \mathcal{B}_{D}, \alpha \in V_{\sigma} \}, \bar{P}) \leq H_{B}(c_0 \delta, \{ f_{\boldsymbol{\gamma}, \alpha},   \boldsymbol{\gamma} \in \mathcal{B}_{D}, \alpha \in V_{\sigma} \}, \bar{P} ),
\end{align*}
for some $c_0$, where $H_{B}$ refers to the standard entropy with bracketing. Since $\bar{P}$ is a probability measure we further find
\begin{align*}
H_{B}(c_0 \delta, \{ f_{\boldsymbol{\gamma}, \alpha},   \boldsymbol{\gamma} \in \mathcal{B}_{D}, \alpha \in V_{\sigma} \}, \bar{P} ) \leq H_{\infty}( c_0 \delta, \{ f_{\boldsymbol{\gamma}, \alpha},   \boldsymbol{\gamma} \in \mathcal{B}_{D}, \alpha \in V_{\sigma} \}),
\end{align*}
where $H_{\infty}$ refers to the entropy in the supremum norm, see, e.g, Lemma 2.1 of \citet{van de Geer:2000}. Now, by the triangle inequality,
\begin{align}
\label{eq:23}
|f_{\boldsymbol{\gamma}_1, \alpha_1}(x,y)-f_{\boldsymbol{\gamma}_2, \alpha_2}(x,y)| \leq |f_{\boldsymbol{\gamma}_1, \alpha_1}(x,y)-f_{\boldsymbol{\gamma}_2, \alpha_1}(x,y)| + |f_{\boldsymbol{\gamma}_2, \alpha_1}(x,y)-f_{\boldsymbol{\gamma}_2, \alpha_2}(x,y)|,
\end{align}
for all $(\boldsymbol{\gamma}_1, \alpha_1), (\boldsymbol{\gamma}_2, \alpha_2) \in (\mathcal{B}_{D} \times V_{\sigma})^2$. For the first term in the RHS of \eqref{eq:23}, by the boundedness of $\psi$, we have
\begin{align}
\label{eq:24}
|f_{\boldsymbol{\gamma}_1, \alpha_1}(x,y)-f_{\boldsymbol{\gamma}_2, \alpha_1}(x,y)| & \leq \frac{1}{\alpha_1}\left| \int_{R(x) + C_n^{1/2} \mathbf{B}^{\top}(x) \mathbf{G}_{\lambda}^{-1/2} \boldsymbol{\gamma}_1}^{R(x) + C_n^{1/2} \mathbf{B}^{\top}(x) \mathbf{G}_{\lambda}^{-1/2} \boldsymbol{\gamma}_2} \left\{ \psi\left(\frac{y + u}{\alpha_1} \right) - \psi\left( \frac{y}{\alpha_1} \right) \right\} du \right| \nonumber
\\ &\leq  \frac{2}{\sigma - \delta} ||\psi||_{\infty} C_n^{1/2}|\mathbf{B}^{\top}(x) \mathbf{G}_{\lambda}^{-1/2} (\boldsymbol{\gamma}_1 - \gamma_2)| \nonumber
\\ & \leq c_0 K^{1/2} C_n^{1/2} ||\boldsymbol{\gamma}_1 - \boldsymbol{\gamma}_2||_{E}.
\end{align}
For the second term in \eqref{eq:23}, the tail condition in B.3 and the fact that $\boldsymbol{\gamma}_2 \in \mathcal{B}_{D}$ entail
\begin{align}
\label{eq:25}
|f_{\boldsymbol{\gamma}_2, \alpha_1}(x,y)-f_{\boldsymbol{\gamma}_2, \alpha_2}(x,y)| & \leq 2 \frac{M_{\sigma}}{(\sigma-\delta)^3} C_n^{1/2} K^{1/2} |\alpha_1-\alpha_2| + 2 \frac{||\psi||_{\infty}}{(\sigma-\delta)^2} K^{1/2} C_n^{1/2} \nonumber
\\ & \leq c_0 C_n^{1/2} K^{1/2}|\alpha_1-\alpha_2|,
\end{align}
for some $c_0>0$. Combining \eqref{eq:23}--\eqref{eq:25} reveals that the $\delta$-covering number in the supremum norm of  $f_{\boldsymbol{\gamma}, \alpha},   \boldsymbol{\gamma} \in \mathcal{B}_{D}, \alpha \in V_{\sigma}$ may be bounded by the product of the $\delta/(K^{1/2} C_n^{1/2})$-covering numbers of $\mathcal{B}_{D}$ and $V_{\sigma}$, viz,
\begin{align}
\label{eq:26}
\mathcal{N}(\delta, \{ f_{\boldsymbol{\gamma}, \alpha},\ \boldsymbol{\gamma} \in \mathcal{B}_D, \ \alpha \in V_{\sigma} \},|| \cdot ||_{\infty} ) & \leq \mathcal{N} (c_0 \delta /(K^{1/2} C_n^{1/2}) , \mathcal{B}_D )  \times  \mathcal{N}( c_0\delta /(K^{1/2} C_n^{1/2}), V_{\sigma}).
\end{align} 
By Lemma 2.5 of \citet{van de Geer:2000} we may further bound the RHS with
\begin{align*}
\mathcal{N}(\delta, \{ f_{\boldsymbol{\gamma}, \alpha},\ \boldsymbol{\gamma} \in \mathcal{B}_D, \ \alpha \in V_{\sigma} \},|| \cdot ||_{\infty} )&  \leq \left(\frac{c_0 K^{1/2} C_n^{1/2}}{\delta} + 1\right)^{K+p} \times \left(\frac{c_0 K^{1/2} C_n^{1/2}}{\delta} + 1\right) \\ &  \leq \left(\frac{c_0 K^{1/2} C_n^{1/2}}{\delta} + 1\right)^{K+p+1},
\end{align*}
for some finite $c_0$, which implies that
\begin{align}
\label{eq:27}
H_{B,S}(\delta, \{ f_{\boldsymbol{\gamma}, \alpha},   \boldsymbol{\gamma} \in \mathcal{B}_{D}, \alpha \in V_{\sigma} \}, \bar{P}) \leq (K+p+1)\log(c_0 K^{1/2} C_n^{1/2}/\delta + 1),
\end{align}
for all $\delta>0$.

With the upper bound of the Bernstein entropy given in \eqref{eq:27}, the bracketing integral in Theorem 5.11 of \citet{van de Geer:2000} may be bounded by

\begin{align*}
\int_{0}^{R} H_{B,S}^{1/2}(u, \{ f_{\boldsymbol{\gamma}, \alpha},   \boldsymbol{\gamma} \in \mathcal{B}_{D}, \alpha \in V_{\sigma} \}, \bar{P}) du & \leq(K+p+1)^{1/2} \int_{0}^{R}\log^{1/2}(c_0 K^{1/2} C_n^{1/2}/u + 1) du 
\\ & \leq c_0 K C_n^{1/2} \int_{0}^{c_0 R/(K^{1/2} C_n^{1/2})} \log^{1/2}(1/u + 1) du
\\ & \leq c_0 K C_n \log^{1/2}(n),
\end{align*}
as $R = c_0 K^{1/2} C_n$ and, by our limit assumptions, $K \to \infty$ and $C_n \to 0$ as $n\to \infty$. It now follows that

\begin{align*}
\int_{0}^{R} H_{B,S}^{1/2}(u, \{ f_{\boldsymbol{\gamma}, \alpha},   \boldsymbol{\gamma} \in \mathcal{B}_{D}, \alpha \in V_{\sigma} \}, \bar{P}) du/(n^{1/2} C_n) & \leq c_0 K \log^{1/2}(n)/n^{1/2}
\\ & = o(1),
\end{align*}
as $n\to \infty$, by our limit assumptions. Thus, Theorem 5.11 of \citet{van de Geer:2000} may be applied to give

\begin{align*}
\Pr\left( \sup_{\boldsymbol{\gamma} \in \mathcal{B}_{D}, \alpha \in V_{\sigma}} |I_1(\boldsymbol{\gamma}, \alpha) - \mathbb{E}\{I_1(\boldsymbol{\gamma}, \alpha)\}| \geq \delta C_n \right) & = \Pr \left( \sup_{\boldsymbol{\gamma} \in \mathcal{B}_{D}, \alpha \in V_{\sigma}} |v_n(f_{\boldsymbol{\gamma}, \alpha})| \geq \delta n^{1/2} C_n  \right)
\\ & \leq c_0 \exp\left[ - c_0 \delta^2 n /K \right],
\end{align*}
for every $\delta>0$. The exponential tends to zero by our limit assumptions which imply that $K/n \to 0$ as $n \to \infty$. This establishes \eqref{eq:20} and thus completes the proof.
\end{proof}

\end{document}